\documentclass[11pt]{article}
\usepackage[top=1in,bottom=1in,left=1in,right=1in]{geometry} 
\usepackage{palatino}

\usepackage[T1]{fontenc}
\usepackage{lmodern}
\usepackage[american]{babel}
\usepackage{tabularx}
\usepackage{booktabs}
\usepackage[table]{xcolor}
\usepackage{algorithm}
\usepackage{algpseudocode}

\usepackage{enumerate}
\usepackage{verbatim}
\usepackage{natbib}
\usepackage{amsmath}
\usepackage{subfigure}
\usepackage{enumerate}
\usepackage{subfigure}
\usepackage{verbatim}
\usepackage{natbib}
\usepackage{multirow}
\usepackage{graphicx}
\usepackage{amssymb,amsmath}
\usepackage{color}
\usepackage{hyperref}
\usepackage{url}

\normalsize

\def\be{\begin{eqnarray}}
\def\ee{\end{eqnarray}}
\usepackage{epsfig}

\newtheorem{lem}{Lemma}

\title{Group Sequential Sample Size for Comparing Two Survival Probabilities at a Specific Time Point}

\author{Susan Halabi$^\ast$, Lu Liu, Chenxi Yu, and Yuan Wu}

\begin{document}

\maketitle

\begin{center} 
Department of Biostatistics and Bioinformatics, Duke University,\\ Durham, North Carolina, USA\\[2pt] $^\ast$\texttt{Corresponding author: susan.halabi@duke.edu}
\end{center}

\begin{abstract}
We propose a novel method that simultaneously determines the sample size for testing two survival probabilities at a pre-specified ltime  while guaranteeing type I error control in both fixed and group-sequential trial designs. Simulations across varying hypothesized differences, failure distributions, censoring proportions, and nominal powers demonstrate consistent performance, while interim analyses highlight reduced type I error and increased power at each look, regardless of the underlying failure time distribution or spending function. Importantly, our method is especially useful for evaluating survival outcomes at a fixed time in randomized trials where one treatment arm includes neoadjuvant therapy prior to surgery while the other involves surgery alone. Furthermore, it is advantageous when the proportional hazards assumption is not satisfied, as often occurs in immunotherapy trials with delayed or time-varying treatment effects or crossing survival curves. The method is also applicable to randomized phase II trials, where smaller sample sizes and the use of intermediate or surrogate time-to-event endpoints demand efficient data use and robust error control. We illustrate the approach with motivating examples in renal and prostate cancer. An accompanying R Shiny application enables investigators to compute sample sizes interactively, facilitating practical trial planning in diverse settings.

{\sc Key Words}: Binomial test; Fixed design; Lan-DeMets spending function; Landmark time; O'Brien-Fleming spending function; Randomized trial; Survival outcome 
\end{abstract}

\section{Introduction}
The primary outcome of a clinical trial evaluating a therapeutic, diagnostic, or other medical intervention is central to defining the intervention’s clinical benefit. The choice of the primary endpoint influences key aspects of trial design, including sample size, trial duration, frequency of assessments, and cost. Demonstration of efficacy typically requires a tangible clinical benefit, such as prolonged survival, symptom improvement, or pain reduction. Phase III clinical trials often use overall survival (OS) as the gold-standard endpoint, although other clinically meaningful time-to-event outcomes, such as progression-free or event-free survival, are also commonly used. \\

In many clinical trials, endpoints are assessed at a prespecified landmark time rather than over the entire follow-up period \cite{{eastham:2020},{pfister:2022}, {kenter:2023}}. This approach is particularly relevant in adjuvant or neoadjuvant trials, where patients are treated with curative intent, typically via surgery with or without systemic therapy, and OS may take many years to mature. Early after surgery, death events are uncommon, and patients may survive for extended periods even after disease recurrence. In contrast, disease-free survival (DFS) or PFS) evaluated at a fixed time can be observed much sooner (e.g., 3 years after randomization), enabling faster trial readout and more timely decision-making for regulatory approval and clinical adoption. For example, the CALGB 90203 (PUNCH) phase III trial used 3-year biochemical PFS as the primary endpoint to assess whether six months of neoadjuvant docetaxel plus androgen deprivation therapy (ADT) prior to radical prostatectomy improved outcomes compared with immediate surgery \cite{eastham:2020}. Because ADT directly affects prostate-specific antigen (PSA) levels, a milestone time to event endpoint was chosen to reduce potential bias from differential treatment effects between the study arms. Similarly, in a recent bladder cancer trial, the primary endpoint was 3-year PFS \cite{pfister:2022}. In that study, 500 patients with non-metastatic muscle-invasive bladder cancer were randomized to receive either methotrexate, vinblastine, doxorubicin, and cisplatin every two weeks or four cycles of gemcitabine and cisplatin every three weeks in the neoadjuvant setting, or surgery alone, illustrating how a fixed-time endpoint can meaningfully capture differences between treatment strategies.

Against this backdrop, we seek to design a phase III trial in patients with a small renal mass suspected of renal cancer. Patients will be randomized to undergo diagnostic biopsy or no biopsy, in addition to standard radiographic work-up. The primary endpoint is 5-year intervention-free survival (IFS), a composite time-to-event outcome evaluated at a fixed time point. In this article, we propose a novel statistical framework for designing trials with time-to-event endpoints measured at a specific time. Building on the group sequential design for single-arm trials proposed by Lin et al. (1996)~\cite{Lindanyu:1996}, we extend the methodology to two-arm randomized trials and demonstrate its application to both fixed and sequential designs. Our approach enables more accurate variance estimation, supports interim monitoring, and improves trial efficiency without compromising statistical validity. The remainder of the article is organized as follows: Section 2 introduces the proposed methodology; Section 3 presents the simulation study; Section 4 provides real-world examples; and Section 5 discusses the implications of our findings. \\

\section{Methods}
\subsection{Background}\label{sec2.1}
 Let $Y_i$, $T_i$, and $C_i$ be the mutually independent enrollment time, event time, and censoring time of individual $i$, respectively, for $i = 1,\ldots,n$, where the measurements of both $T_i$ and $C_i$ start at $Y_i$. Assume that the collection $\{(Y_i, T_i, C_i)\}_{i=1}^n$ consists of independent and identically distributed triples.

At calendar time $t$, when an interim analysis is planned, the observed time for subject $i$ is 
\[
X_i(t) = \min\{T_i,\, C_i,\,(t - Y_i)^{+}\},
\]
and the failure indicator is
\[
\Delta_i(t) = I\!\left[T_i \le \min\{C_i,\,(t - Y_i)^{+}\}\right].
\]
Using the dataset $\{X_i(t), \Delta_i(t)\}_{i=1}^n$, the Nelson--Aalen estimator of the cumulative hazard function 
$\Lambda(x) = \int_{0}^{x}\lambda(u)\,du$ is
\begin{equation}\label{Lambda_hat}
\hat{\Lambda}(x;t)
    = \sum_{i : X_i(t) \le x} \frac{\Delta_i(t)}{R_i(t)},
\end{equation}
where $R_i(t) = \sum_{j=1}^{n} I\!\left[X_j(t) \ge X_i(t)\right]$.

As described by Lin  et al. (1996)~\cite{Lindanyu:1996}, for a fixed calendar time $t$ and a failure time $x \le t$, 
the process $n^{1/2}\{\hat{\Lambda}(x;t) - \Lambda(x)\}$ converges weakly to a zero-mean Gaussian 
process with independent increments and variance function
\begin{equation}\label{sigma}
\sigma^2(x;t) = \int_{0}^{x} \frac{\lambda(u)\,du}{\pi(u;t)},
\end{equation}
where $\pi(u;t) = P\{X_i(t) \ge u\}$. An estimator of $\sigma^2(x;t)$ is given by
\begin{equation}\label{sigmahat}
\hat{\sigma}^2(x;t)
    = \sum_{i : X_i(t) \le x} \frac{\Delta_i(t)}{R_i^2(t)/n}.
\end{equation}

\subsection{Test Statistic}

Suppose that the study consists of two treatment arms: arm~$1$ and $2$ with $n_1$ and $n_2$ individuals, respectively, where $n_1 + n_2 = n$.
From the weak convergence described in Section~\ref{sec2.1}, regarding cumulative hazard functions $\Lambda_1(\cdot)$ and $\Lambda_2(\cdot)$ in arm 1 and 2, respectively, we have
 $$\left\{\hat{\Lambda}_2(x;t)-\Lambda_2(x)\right\}-\left\{\hat{\Lambda}_1(x;t)-\Lambda_1(x)\right\}\approx
N\left\{0,\frac{\sigma_1^2(x;t)}{n_1}+\frac{\sigma_2^2(x;t)}{n_2}\right\},$$  where $\hat{\Lambda}_i(x;t)$ as defined by (\ref{Lambda_hat}),  is the estimator for $\Lambda_i(x)$ at  monitoring time $t$ in arm $i$ for $i=1, 2$,  
$\sigma_i^2(x;t)$ is as 
displayed by (\ref{sigma}) for arm $i$ with $i=1, 2$.  
Then, the standardized test statistic of the difference between the two arms is
	\begin{equation*}
	Z(x;t)=\frac{\hat{\Lambda}_2(x;t)-\hat{\Lambda}_1(x;t)}{\sqrt{\frac{\hat{\sigma}_1^2(x;t)}{n_1}+\frac{\hat{\sigma}_2^2(x;t)}{n_2}}},
	\end{equation*}
where $\hat{\sigma}_i^2(x;t)$ is as presented by (\ref{sigmahat}) for arm $i$ with $i=1, 2$.

Alternatively, by the delta method, $\left[\log\left\{\hat{\Lambda}_2(x;t)\right\}-\log\left\{\Lambda_2(x)\right\}\right]-\left[\log\left\{\hat{\Lambda}_1(x;t)\right\}-\log\left\{\Lambda_1(x)\right\}\right]$ can be approximated by $N\left(0,\Sigma_{\mathrm{t}}\right)$, where  
\begin{equation}\label{variance_log}
\Sigma_{\mathrm{t}}=\frac{\sigma_1^2(x;t)}{n_1\left[\log\left\{S_1(x)\right\}\right]^2}+\frac{\sigma_2^2(x;t)}{n_2\left[\log\left\{S_2(x)\right\}\right]^2},
\end{equation}
with $S_1(\cdot)=\exp\left\{-\Lambda_1(\cdot)\right\}$ and $S_2(\cdot)=\exp\left\{-\Lambda_2(\cdot)\right\}$. Consequently, the following test statistic can  be derived. 
	\begin{equation}\label{statistic}
	Z(x;t)=\frac{\log\left\{\hat{\Lambda}_2(x;t)\right\}-\log\left\{\hat{\Lambda}_1(x;t)\right\}}{\sqrt{\frac{\hat{\sigma}_1^2(x;t)}{n_1\hat{\Lambda}_1^2(x;t)}+\frac{\hat{\sigma}_2^2(x;t)}{n_2\hat{\Lambda}_2^2(x;t)}}}
	\end{equation}
Note that for the rest of this manuscript we choose the the test statistic $Z(x;t)$ defined by (\ref{statistic}) since the logarithm  of cumulative hazard function is not restricted to positive values.

\subsection{Group Sequential Test}\label{sec2.3}

For a $K$-stage group sequential design, let $p_k$ (for $k=1,\dots,K$) denote the percentage of individuals with at least follow-up time $x$ at the $k$th interim  look. For ease of presentation, we refer to the set $\{p_k\}_{k=1}^K$ as ``complete percentages''.
 It is evident that the monitoring time $t_k=\min\left\{(np_k)/r+x, n/r+t_f\right\}$, where $n$ is the total sample size of enrolled patients, $r$ is the constant accrual rate, $t_f$ is the follow-up duration after the accrual. 

With the sequence of test statistics  $\left\{Z\left(x;t_k\right)\right\}_{k=1}^K$ based on (\ref{statistic}), the proposed sequential test is determined by the corresponding critical boundaries $\{c_k\}_{k=1}^K$. In what follows, we describe how to derive $\{c_k\}_{k=1}^K$. Without loss of generality, we consider two-sided tests for the remainder of the manuscript.

First, the O'Brien–Fleming (1979)~\cite{obrien:fleming:1979} or Lan-DeMets (1983)~\cite{lan:demets:1983} approach can be employed for the alpha spending among $K$ tests to control the type I error while maintaining an increasing and convex spending trend. Specifically, in the O'Brien-Fleming and Lan-DeMets approaches, the cumulative type I error function are defined by 
$\alpha\left(t_k\right)=2-2\cdot\Phi\left\{\Phi^{-1}\left(1-\alpha/2\right)\big/\sqrt{p_k}\right\}$ and $\alpha\left(t_k\right)=\left(\Sigma_{t_K}\big/\Sigma_{t_k}\right)^2\cdot\alpha$, respectively, where $\alpha$ is the overall nominal type I error level, $\Phi(\cdot)$ is the cumulative probability function of the standard normal distribution, and $\Sigma_{t_k}$ is as defined by (\ref{variance_log}). In both approaches, the alpha level at $t_k$ is calculated by $$\alpha_{1}=\alpha\left(t_1\right)~\mathrm{or}~ \alpha_{k}=\alpha(t_k)-\alpha\left(t_{k-1}\right)~\mathrm{for}~k=2,\cdots,K.$$

Next, 
under the null hypothesis $H_0: S_1(x)=S_2(x)=s_0$, for constant $s_0$ satisfying $0<s_0<1$, parallel to the result for the one-sample group sequential design in Lin et al. (1996)~\cite{Lindanyu:1996}, we have the following approximation of
the vector of the proposed test statistics as defined by (\ref{statistic}).
\begin{align*}
    \left\{Z\left(x;t_1\right),Z\left(x;t_2\right),\cdots,Z\left(x;t_K\right)\right\}^\top \approx \left(G_1,G_2,\cdots,G_K\right)^\top\sim N\left(\boldsymbol{0}_{K\times 1}, \boldsymbol{\Sigma}_{K\times K}\right),
\end{align*}
where 
\begin{equation}\label{covariancematrix}
    \boldsymbol{\Sigma}_{K\times K}=\left(\begin{array}{cccc}
1&\sigma_{12}&\cdots&\sigma_{1K}\\
\sigma_{21}&1&\cdots&\sigma_{2K}\\
\vdots&\vdots&\ddots&\vdots\\
\sigma_{K1}&\sigma_{K2}&\cdots&1
\end{array}\right),
\end{equation}
with $\sigma_{uv}=\sigma_{vu}=\sqrt{\Sigma_{t_v}\big/{\Sigma_{t_u}}}$ ($1\le u<v\le K$), for both $\Sigma_{t_u}$ and $\Sigma_{t_v}$ as displayed by (\ref{variance_log}).

Consequently,  $\{c_k\}_{k=1}^K$  can be obtained by solving the following equation   iteratively  with $k$ starting at $1$, based on the multivariate normal distribution of $\left(G_1,G_2,\cdots,G_K\right)^\top$.
\begin{equation}\label{criticalboundaries}
	\Pr\left(\left|G_1\right|\le c_1,\cdots,\left|G_{k-1}\right|\le c_{k-1},\left|G_k\right|>c_k\right)=\alpha_k.
	\end{equation}
    
    


\subsection{Power and Sample Size Calculation}

The power calculation for the proposed sequential design relies on the asymptotic behavior of the test statistic $\left\{Z\left(x;t_k\right)\right\}_{k=1}^K$ under $S_1(x)\neq S_2(x)$. Specifically, the following  approximation (\ref{approx}) facilitates the power calculation. Similar to the approximation under the null hypothesis $H_0$ described in Section 2.3 , under the alternative hypothesis $H_1: S_1(x)=s_0, S_2(x)=s_1$, for constant $s_0$ and $s_1$ satisfying $0<s_0<1$, $0<s_1<1$, $s_0\neq s_1$,  we have 
\begin{align}\label{approx}
    \left\{Z\left(x;t_1\right),Z\left(x;t_2\right),\cdots,Z\left(x;t_K\right)\right\}^\top \approx \left(\tilde{G}_1,\tilde{G}_2,\cdots,\tilde{G}_K\right)^\top\sim N\left(\boldsymbol{\mu}_{K\times 1}, \boldsymbol{\Sigma}_{K\times K}\right),
\end{align}
where 
$\boldsymbol{\Sigma}_{K\times K}$ is as defined by (\ref{covariancematrix}),
\begin{equation}\label{mean}
\boldsymbol{\mu}^\top_{K\times 1}=\left(\frac{h(s_1)-h(s_0)}{\sqrt{\Sigma_{t_1}}}, \frac{h(s_1)-h(s_0)}{\sqrt{\Sigma_{t_2}}},\cdots,\frac{h(s_1)-h(s_0)}{\sqrt{\Sigma_{t_K}}}\right), \end{equation}
for $h(\cdot)=\log\left\{-\log\left(\cdot\right)\right\}$ and  $\Sigma_{t_k}$ as displayed by (\ref{variance_log}) with $k=1,\cdots,K$.

Based on the multivariate normal distribution of $\left(\tilde{G}_1,\tilde{G}_2,\cdots,\tilde{G}_K\right)^\top$, we use the critical boundaries $\left\{c_k\right\}_{k=1}^K$ from the previous section to evaluate the reject probabilities $\left\{\mathrm{RP}_k\right\}_{k=1}^K$ through $K$ monitoring looks as follows.
\begin{equation}\label{power}
\mathrm{RP}_k=\Pr\left(\left|\tilde{G}_1\right|\le c_1,\cdots,\left|\tilde{G}_{k-1}\right|\le c_{k-1},\left|\tilde{G}_k\right|>c_k\right).
\end{equation}
The theoretical power denoted by $p$ is then calculated as  the sum of these rejection probabilities, that is, $p=\sum_{k=1}^K\mathrm{RP}_k$.

Because a sequential design does not admit the straightforward inversion of the power calculation used in fixed designs, we propose Algorithm~\ref{alg1} for sample-size determination. The algorithm performs a search over possible sample sizes and identifies one that yields the pre-specified power based  on formula (\ref{power}).

\begin{algorithm}\
\caption{Sample size calculation for $K$-stage design}\label{alg1}
\begin{algorithmic}
\Require \\
\begin{itemize}
\item
$\alpha$: type I error level; alpha spending method; $1-\beta$: theoretical power 
\item
$p_0$: proportion of $n_1$ in $n$; $\left\{p_k\right\}_{k=1}^K$: complete percentages  
\item 
$r$: accrual rate; $t_f$: follow-up duration 
\item 
 $s_0$: value of $S_1(x)$; $s_1$: value of $S_2(x)$
\end{itemize}
\vspace{5pt}
\If{O'Brien-Fleming approach is adopted} 
    \State 
    $\left\{\alpha_k\right\}_{k=1}^K \gets \alpha~\mathrm{and}~\left\{p_k\right\}_{k=1}^K$ 
\EndIf    
\vspace{5pt}
\State {\em Initial power}:~~$p \gets 0$
\State {\em Initial sample size:}~~
$n_0 \gets 100$ (or another small even number)

 \vspace{5pt}
\While {$p<1-\beta$}
     \vspace{5pt}
    \State $n \gets n_0$
    \vspace{5pt}
    \State $n_1 \gets np_0$; $n_2 \gets n(1-p_0)$
    \vspace{5pt}
    \State $\left\{\Sigma_{t_k}\right\}_{k=1}^K \gets ~\text{formula (\ref{variance_log}) with}~n_1, n_2~\text{and}~s_0~\text{under }~H_0$
    \vspace{5pt}
    \State 
    $\boldsymbol{\Sigma}_{K\times K} \gets ~\text{formula (\ref{covariancematrix}) with}~ \left\{\Sigma_{t_k}\right\}_{k=1}^K$ \Comment{covariance matrix under $H_0$}
    \vspace{5pt}
    \If{Lan-DeMets approach is adopted} 
    \State 
    $\left\{\alpha_k\right\}_{k=1}^K \gets \alpha~\text{and}~\left\{\Sigma_{t_k}\right\}_{k=1}^K$ 
    \EndIf   
    \vspace{5pt}
    
    \State 
    $\{c_k\}_{k=1}^K \gets ~\text{formula (\ref{criticalboundaries}) with} \{\alpha_k\}_{k=1}^K \text{~and~} \boldsymbol{\Sigma}_{K\times K}$ 
    \vspace{5pt}
     \State $\left\{\Sigma_{t_k}\right\}_{k=1}^K \gets ~\text{formula (\ref{variance_log}) with}~n_1, n_2, s_0~\text{and}~s_1~\text{under }~H_A$
    \vspace{5pt}
     \State 
     $\boldsymbol{\Sigma}_{K\times K} \gets ~\text{formula (\ref{covariancematrix}) with}~ \left\{\Sigma_{t_k}\right\}_{k=1}^K$ \Comment{covariance matrix under $H_A$}
     \vspace{5pt}
    \State $\boldsymbol{\mu}_{K\times 1} \gets
    ~\text{formula (\ref{mean}) with}~s_0, s_1~\text{and}~\left\{\Sigma_{t_k}\right\}_{k=1}^K$
     \vspace{5pt}
       \State 
    $\left\{\mathrm{RP}_k\right\}_{k=1}^K \gets ~\text{formula (\ref{power}) with}~ \{c_k\}_{k=1}^K \text{~and~}  \boldsymbol{\mu}_{K\times1} \text{~and~} \boldsymbol{\Sigma}_{K\times K}$ 
    \vspace{5pt}
    \State {\em Power at $n$}:~~$p \gets \sum_{k=1}^K\mathrm{RP}_k$
      \vspace{5pt}
     \State $n_0 \gets n_0+2$ 
   \EndWhile
   \State Output sample size $n$ for power $1-\beta$
\end{algorithmic}
\end{algorithm}

\subsection{Comparison with the Binomial Test}\label{sec2.5}   
An alternative approach to our proposed method is the widely used binomial test, which only considers patients with at least follow-up  time $x$. In this section we compare the power of the proposed test with the binomial test theoretically. To facilitate a fair comparison corresponding to the statistic (\ref{statistic}), we prefer a binomial test statistic with a log-log transformation, as follows.
\begin{equation}\label{statistic_binom}
Z_{\mathrm{binom}}(x;t)=\frac{\log\left[-\log\left\{\hat{p}_2(x;t)\right\}\right]-\log\left[-\log\left\{\hat{p}_1(x;t)\right\}\right]}{\sqrt{\frac{1-\hat{p}_1(x;t)}{n_1(x;t)\hat{p}_1(x;t)\left[\log\left\{\hat{p}_1(x;t)\right\}\right]^2}+\frac{1-\hat{p}_2(x;t)}{n_2(x;t)\hat{p}_2(x;t)\left[\log\left\{\hat{p}_2(x;t)\right\}\right]^2}}},
\end{equation}	
where $n_i(x;t)=\sum_{j=1}^{n_i}1_{\left[\tilde{C}_j(t)\ge x\right]}$ denotes the number of patients with at least follow-up  time $x$, at calendar time $t$ in arm $i$ for $i=1, 2$, with $\tilde{C}(t)=\min\{C,(t-Y)^+\}$ representing the overall censoring time variable at calender time $t$, $\hat{p}_i(x;t)$ is the proportion who survive for at least time $x$, among the $n_i(x;t)$ patients.

For the theoretical power, similar to the previous section we rely on a sample-size-dependent  approximation of our proposed statistic~(\ref{statistic}) derived from its asymptotic behavior when $S_1(x)\neq S_2(x)$, as follows.
\begin{align*}
\tilde{Z}(x;t)=Z_{0,1}+ \frac{\log\left[-\log\left\{S_2(x)\right\}\right]-\log\left[-\log\left\{S_1(x)\right\}\right]}
{\sqrt{\Sigma_{t}}},
\end{align*}
where $Z_{0,1}\sim N(0,1)$ and $\Sigma_t$ is as defined by (\ref{variance_log}). Note that $\tilde{Z}(x;t_k)=\tilde{G}_k$ for $\tilde{G}_k$ in (\ref{approx}) when $S_1(x)=s_0$ and $S_2(x)=s_1$.
Similarly, the following sample-size-dependent approximation of the binomial statistic(\ref{statistic_binom}) is also required.
\begin{align*}
\tilde{Z}_{\mathrm{binom}}(x;t)=Z_{0,1}+\frac{\log\left[-\log\left\{S_2(x)\right\}\right]-\log\left[-\log\left\{S_1(x)\right\}\right]}
{\sqrt{\frac{1-S_1(x)}{n_1\Pr\left\{\tilde{C}(t)\ge x\right\}S_1(x)\left[\log\left\{S_1(x)\right\}\right]^2  }+\frac{1-S_2(x)}{n_2\Pr\left\{\tilde{C}(t)\ge x\right\}S_2(x)\left[\log\left\{S_2(x)\right\}\right]^2  }}},
\end{align*}
which follows from $n_i(x; t)/n_i \xrightarrow{p} \Pr\{\tilde{C}(t) \ge x\}$ and $\hat{p}_i(x; t) \xrightarrow{p} S_i(x)$ as $n_i \to \infty$ for $i = 1, 2$.

The following Lemma \ref{lem1} shows that, under the fixed design, the proposed test is theoretically no less powerful than the binomial test. This finding further implies that the proposed test is expected to achieve greater power under the fixed design when certain patients are administratively censored or withdraw from the study before reaching time $x$.
\begin{lem}\label{lem1}
Assume $S_2(x)\neq S_1(x)$, we have
 \begin{align*}
\Pr\left\{\left|\tilde{Z}(x,t)\right|>c\right\}\ge\Pr\left\{\left|\tilde{Z}_{\mathrm{binom}}(x;t)\right|>c\right\},
\end{align*}
where the equality only holds when $\Pr(\tilde{C}(t)\ge x)=1$.
\end{lem}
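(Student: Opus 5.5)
Both $\tilde{Z}(x;t)$ and $\tilde{Z}_{\mathrm{binom}}(x;t)$ are of the form $Z_{0,1}+\delta/\sqrt{V}$ with the same numerator $\delta=\log[-\log\{S_2(x)\}]-\log[-\log\{S_1(x)\}]\neq 0$. Write $V=\Sigma_t$ for the proposed statistic and $V_{\mathrm{b}}$ for the binomial variance. The function $\mu\mapsto\Pr(|Z_{0,1}+\mu|>c)=\Phi(-c-\mu)+1-\Phi(c-\mu)$ is even in $\mu$ and strictly increasing in $|\mu|$; its derivative in $\mu>0$ is $\phi(c-\mu)-\phi(c+\mu)>0$. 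So the lemma reduces to the variance comparison $\Sigma_t\le V_{\mathrm{b}}$, with equality iff $\Pr\{\tilde C(t)\ge x\}=1$. Since both variances are sums over the two arms with the same weights $1/(n_i[\log S_i(x)]^2)$, it suffices to prove, for each arm separately,
\begin{equation*}
\sigma_i^2(x;t)=\int_0^x\frac{\lambda_i(u)\,du}{\pi_i(u;t)}\;\le\;\frac{1-S_i(x)}{S_i(x)\Pr\{\tilde C(t)\ge x\}}.
\end{equation*}

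For this per-arm inequality, I will use the independence of $T$, $C$ and $Y$ to factor $\pi_i(u;t)=P\{X(t)\ge u\}=S_i(u)\,G_t(u)$, where $G_t(u)=\Pr\{\tilde C(t)\ge u\}$ is nonincreasing in $u$. Hence for $u\le x$ we have $G_t(u)\ge G_t(x)$, and
\begin{equation*}
\sigma_i^2(x;t)\le\frac{1}{G_t(x)}\int_0^x\frac{\lambda_i(u)}{S_i(u)}\,du=\frac{1}{G_t(x)}\int_0^x\frac{-dS_i(u)}{S_i(u)^2}=\frac{1}{G_t(x)}\left\{\frac{1}{S_i(x)}-1\right\},
\end{equation*}
which is exactly the binomial term. (When there is no censoring before $x$, this identity $\int_0^x \lambda/S = (1-S)/S$ is the familiar Greenwood/binomial variance equality.)

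The main obstacle is the equality characterization. The bound is tight iff $G_t(u)=G_t(x)$ for $\lambda_i$-almost every $u\in[0,x]$, that is, iff no censoring mass falls in $[0,x)$ where events can occur. The statement as written says equality iff $G_t(x)=1$, which is slightly stronger than what the argument gives. I would handle this by assuming $\lambda_i>0$ on $(0,x]$, as for the continuous failure distributions considered, and $G_t$ right-continuous with $G_t(0)=1$. Then $G_t$ constant on $[0,x]$ forces $G_t(x)=1$, and conversely $G_t(x)=1$ gives $G_t\equiv 1$ on $[0,x]$, so both inequalities become equalities. Finally, the strict monotonicity of the rejection probability in $|\mu|$, together with $\delta\ne0$, transfers strictness of the variance inequality to strictness of the power inequality whenever $G_t(x)<1$.
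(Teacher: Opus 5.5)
Your proposal is correct and follows essentially the same route as the paper: you factor $\pi_i(u;t)=S_i(u)\Pr\{\tilde C(t)\ge u\}$, bound the censoring factor below by its value at $x$, evaluate $\int_0^x d\Lambda_i(u)/S_i(u)=\{1-S_i(x)\}/S_i(x)$, and transfer the per-arm variance inequality to the power inequality. You are more explicit than the paper on two points it leaves implicit. First, the strict monotonicity of $\mu\mapsto\Pr(|Z_{0,1}+\mu|>c)$ in $|\mu|$ (for $c>0$). Second, the need for $\lambda_i>0$ and no censoring atom at $0$ in the equality case, which is exactly what the paper's ``there exists $v\in(0,x)$'' step silently assumes.
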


{\em Proof.}

When $\Pr(\tilde{C}(t)\ge x)<1$, there exists  $v\in(0,x)$, such that $\Pr\left\{\tilde{C}(t)\ge u\right\}> \Pr\left\{\tilde{C}(t) \ge x\right\}$ for any $u$ in $[0,v]$. It follows that
$$
\sigma_i^2(x;t)=\int_0^x\frac{d\Lambda_i(u)}{S_i(u-)\Pr\left\{\tilde{C}(t)\ge u\right\}}
<
\int_0^x\frac{d\Lambda_i(u)}{S_i(u-)\Pr\left\{\tilde{C}(t)\ge x\right\}}
=\frac{1-S_i(x)}{S_i(x)\Pr\left\{\tilde{C}(t)\ge x\right\}}
$$
for $\sigma_i^2(x;t)$ as defined by (\ref{sigma}) with $i=1,2$, where the second equality in the above display is obtained by Stieltjes integral as introduced by~Fleming and Harrington (1991)~\cite{fleming:1991}. Then 
$$
\frac{\sigma_i^2(x;t)}{n_i\left[\log\left\{S_i(x)\right\}\right]^2 }
<\frac{1-S_i(x)}{n_i\Pr\left\{\tilde{C}(t)\ge x\right\}S_i(x)\left[\log\left\{S_i(x)\right\}\right]^2  },
$$
for $i=1,2$. 
Hence, 
\begin{align*}
\frac{\left|\log\left[-\log\left\{S_2(x)\right\}\right]-\log\left[-\log\left\{S_1(x)\right\}\right]\right|}
{\sqrt{\Sigma_t}}=
\frac{\left|\log\left[-\log\left\{S_2(x)\right\}\right]-\log\left[-\log\left\{S_1(x)\right\}\right]\right|}
{\sqrt{\frac{\sigma_1^2(x;t)}{n_1\left[\log\left\{S_1(x)\right\}\right]^2}+\frac{\sigma_2^2(x;t)}{n_2\left[\log\left\{S_2(x)\right\}\right]^2}}}\\
>
\frac{\left|\log\left[-\log\left\{S_2(x)\right\}\right]-\log\left[-\log\left\{S_1(x)\right\}\right]\right|}
{\sqrt{\frac{1-S_1(x)}{n_1\Pr\left\{\tilde{C}(t)\ge x\right\}S_1(x)\left[\log\left\{S_1(x)\right\}\right]^2  }+\frac{1-S_2(x)}{n_2\Pr\left\{\tilde{C}(t)\ge x\right\}S_2(x)\left[\log\left\{S_2(x)\right\}\right]^2  }}}.
\end{align*}
Therefore, 
\begin{align*}
\Pr\left\{\left|\tilde{Z}(x,t)\right|>c\right\}>\Pr\left\{\left|\tilde{Z}_{\mathrm{binom}}(x;t)\right|>c\right\},
\end{align*}
Similarly we see that when $\Pr(\tilde{C}(t)\ge x)=1$,
\begin{align*}
\Pr\left\{\left|\tilde{Z}(x,t)\right|>c\right\}=\Pr\left\{\left|\tilde{Z}_{\mathrm{binom}}(x;t)\right|>c\right\}.~~~~~~~~\square
\end{align*}

\section{Simulation and Analytical Studies}

The simulations compare the empirical type I error and power of our proposed sequential test with the sequential version of the binomial test described in Section~\ref{sec2.5}. In addition, the sequential design is compared with its fixed design counterpart in terms of the number of enrolled patients, based on Algorithm~\ref{alg1}, and the total study duration.

For the proposed sequential test (Section~\ref{sec2.3}), we assume the following parameter settings: a constant accrual rate $r = 120$, follow-up duration $t_f = x$ (the specific time), number of test stages $K = 3$, completion proportions $\{p_1, p_2, p_3\} = \{0.5, 0.75, 1\}$, and nominal type I error $\alpha = 0.05$, with the O’Brien–Fleming approach adopted. All of these parameters are also applied to the sequential binomial test to ensure comparability between the two methods. 

In the first part of the simulation, we evaluate type I error control under the null hypothesis $H_0: S_1(x) = S_2(x) = 0.25$ for $x = 1$ and $x = 5$. We assume that $T_1$ and $T_2$ follow the same Weibull distribution with shape parameter 2. A common loss-to-follow-up censoring time $C \sim \mathrm{Unif}(0, \theta_c)$ is imposed on both groups, where $\theta_c$ is determined using the censoring rate evaluation formula of Halabi and Singh (2004)~\cite{halabi:2004}:

$$\mathrm{CR}=\left(n_1/n\right)\Pr(T_1>C)+\left(n_2/n\right)\Pr(T_2>C),$$
where both $\mathrm{CR}=0.2$ and $0.5$ are considered. Note that $\mathrm{CR}=0$ (the absence of loss-of-follow-up censoring) is also included as one scenario regarding censoring time. 

Ten thousand independent time-to-event datasets are generated for each  total sample size $n=1600, 800, 400$ and $200$, with equal sample sizes  $n_1=n_2=n/2$. For each generated dataset,  both the proposed sequential test and the sequential binomial test are conducted . The empirical type I error for each method is defined as the proportion of rejections among the $10,000$  simulated tests. Figures ~\ref{fig1} and~\ref{fig2} present results for $x=1$ and $5$, respectively. Both figures, constructed from the values in Table~\ref{tab1}, indicate that the two testing approaches achieve adequate control of the cumulative type I error and closely follow the corresponding theoretical cumulative curves.

In the second part of the simulation, we compare the empirical power of the two sequential tests. The data-generating mechanism is similar to that used in the type I error study, except that the distribution of $T_2$ is modified. Specifically, $T_2$ still follows a Weibull distribution with shape parameter 2, but its survival probability at the specific time is set to $S\_2(x) = 0.25 + \mathrm{D}$, where $\mathrm{D} = 0.05, 0.1, 0.15,$ and $0.2$.

For each combination of theoretical power $1-\beta = 80\%$ and $90\%$, censoring rate $\mathrm{CR} = 0, 0.2,$ and $0.5$, and effect size $\mathrm{D} = 0.05, 0.1, 0.15,$ and $0.2$, we employ Algorithm~\ref{alg1} to achieve the required total sample size $n$ with $n_1=n_2=n/2$.  Similar to the preceding type I error study, $10,000$ independent datasets of size $n$ are generated to estimate the empirical power of each sequential test, defined as the proportion of rejections among the corresponding $10,000$ tests. Figures~\ref{fig3} and \ref{fig4} present the simulation results at $x=1$ for $1-\beta=80\%$ and $90\%$, respectively. Figures~\ref{fig5} and \ref{fig6} corresponds to $x=5$. Each panel in these figures corresponds to one scenario defined by the combination of $1-\beta$, $\mathrm{CR}$ and $\mathrm{D}$ and displays the required sample size along with two empirical power curves derived from Tables~\ref{tab2} and~\ref{tab3}. Across all scenarios, the empirical cumulative power closely matches the corresponding theoretical power evaluated by (\ref{power}).\\

It is evident that across all scenarios, the empirical cumulative power values are very close to their theoretical counterparts evaluated by (\ref{power}). In terms of testing performance, the proposed sequential test consistently outperforms  the sequential binomial test, even when only administrative censoring is present (at the first two monitoring looks with $\mathrm{CR}=0$).  Moreover, the advantage of the proposed method becomes substantially more pronounced in the presence of loss-to-follow-up censoring. This finding is consistent with the theoretical result in Lemma~\ref{lem1}, which establishes that the proposed sequential test is more powerful than the binomial approach.  \\

The supplementary document also presents results for the remaining scenarios: Weibull with Lan–DeMets boundaries (Figures S1–S2: empirical type I error; Figures S3–S6: empirical power), exponential with O’Brien–Fleming boundaries (Figures S7–S8: empirical type I error; Figures S9–S12: empirical power), and exponential with Lan–DeMets boundaries (Figures S13–S14: empirical type I error; Figures S15–S18: empirical power). These results further demonstrate the advantages of the proposed method over the binomial approach.

\begin{figure}
\begin{center}
			\scalebox{0.7}{\includegraphics[angle =0]{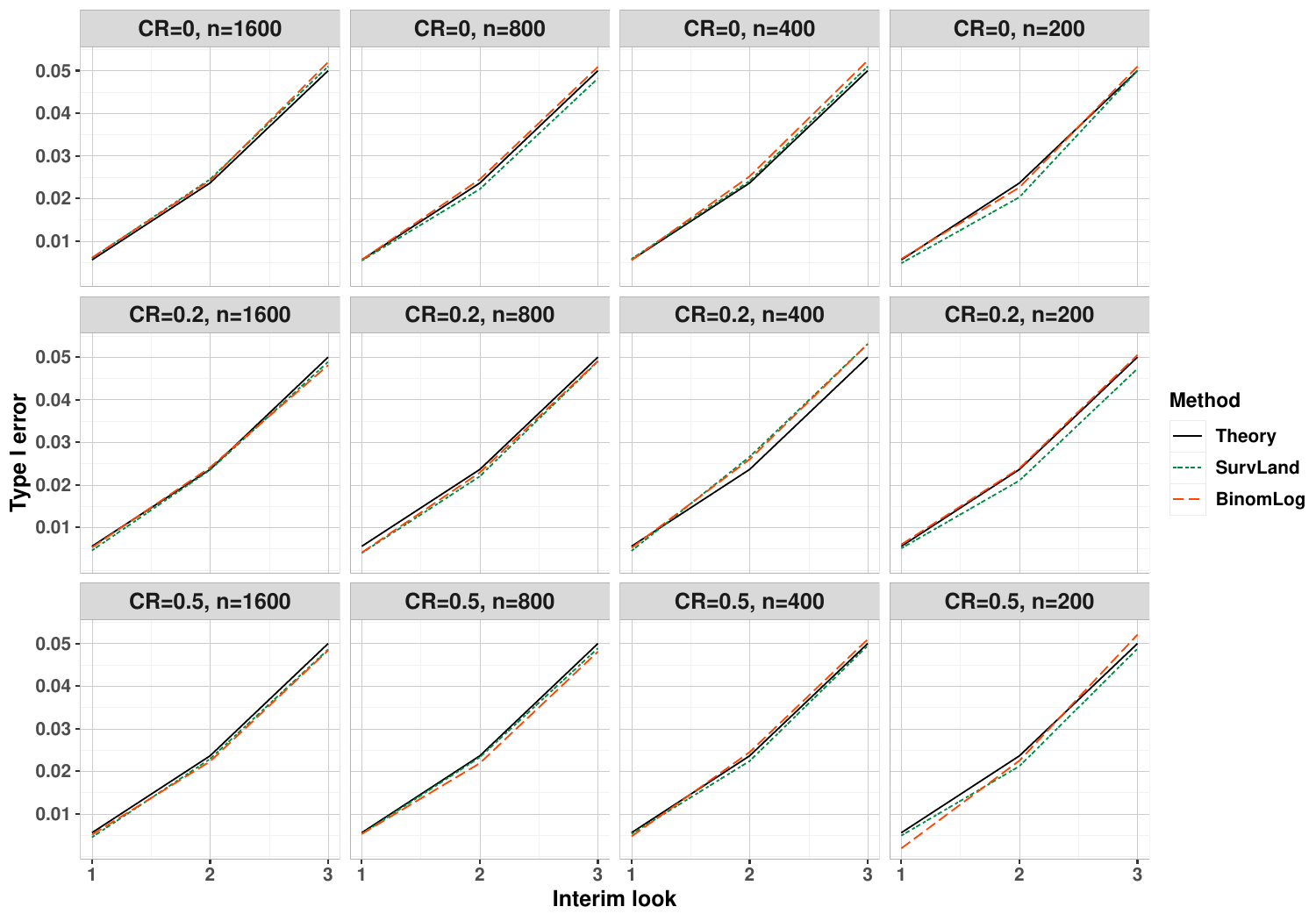}}
\caption{Type I error comparisons between proposed test (SurvLand) and binomial test (BinomLog) based on  survival rate at $x=1$.
}\label{fig1}
\end{center}
\end{figure}

\begin{figure}
\begin{center}
			\scalebox{0.7}{\includegraphics[angle =0]{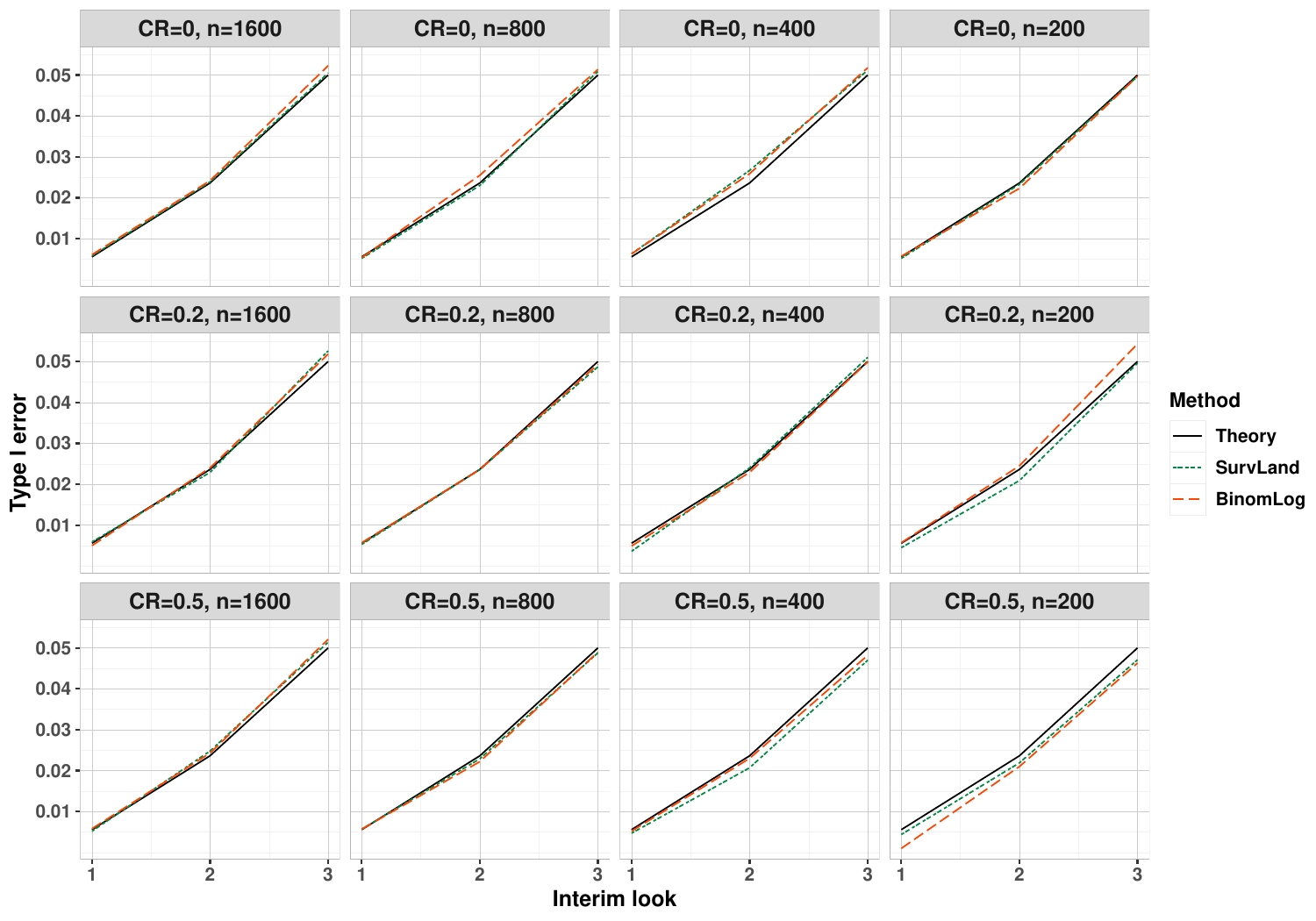}}
\caption{Type I error comparisons between proposed test (SurvLand) and binomial test  (BinomLog) based on survival rate at $x=5$}\label{fig2}
\end{center}
\end{figure}

\begin{table}[htp]
\centering\caption{Empirical type I error comparisons between SurvLand and BinomLog}\label{tab1}
\scriptsize
\begin{tabular}{rrrrrrrrrrrrr}
\hline
&\multicolumn{12}{c}{Information}\\
&$\mathrm{50\%}$&$\mathrm{75\%}$&$\mathrm{100\%}$&
$\mathrm{50\%}$&$\mathrm{75\%}$&$\mathrm{100\%}$&$\mathrm{50\%}$&$\mathrm{75\%}$&$\mathrm{100\%}$&
$\mathrm{50\%}$&$\mathrm{75\%}$&$\mathrm{100\%}$\\
\cmidrule(lr){2-4}\cmidrule(lr){5-7}\cmidrule(lr){8-10}\cmidrule(lr){11-13}
&\multicolumn{3}{c}{$\mathrm{n=1600}$}& \multicolumn{3}{c}{$\mathrm{n=800}$}&\multicolumn{3}{c}{$\mathrm{n=400}$}& \multicolumn{3}{c}{$\mathrm{n=200}$}\\
\hline
&\multicolumn{12}{c}{$\text{Survival rate at } x=1$}\\
\multicolumn{1}{r}{$\mathrm{CR=0}$}&&&&&&&&&&&&\\
$\mathrm{SurvLand}$&0.006	&0.024	&0.052
&0.006	&0.025	&0.051
&0.006	&0.025	&0.052
&0.006	&0.023	&0.051\\
$\mathrm{BinomLog}$&0.006	&0.024	&0.053
	&0.006	&0.026	&0.053
&0.005	&0.024	&0.049
	&0.002	&0.020	&0.048\\
\cmidrule(lr){2-4}\cmidrule(lr){5-7}\cmidrule(lr){8-10}\cmidrule(lr){11-13}
\multicolumn{1}{r}{$\mathrm{CR=0.2}$}&&&&&&&&&&&&\\
$\mathrm{SurvLand}$&0.005	&0.024	&0.049
&0.004	&0.022	&0.049
&0.005	&0.027	&0.053
&0.005	&0.021	&0.047\\
$\mathrm{BinomLog}$&0.005	&0.024	&0.048
	&0.004	&0.023	&0.049
&0.005	&0.026	&0.053
	&0.006	&0.024	&0.051\\
\cmidrule(lr){2-4}\cmidrule(lr){5-7}\cmidrule(lr){8-10}\cmidrule(lr){11-13}
\multicolumn{1}{r}{$\mathrm{CR=0.5}$}&&&&&&&&&&&&\\
$\mathrm{SurvLand}$&0.005	&0.023	&0.049
&0.005	&0.023	&0.049
&0.005	&0.022	&0.050
&0.005	&0.021	&0.049\\
$\mathrm{BinomLog}$&0.005	&0.022	&0.048
	&0.005	&0.022	&0.048
	&0.005	&0.025	&0.051
	&0.002	&0.023	&0.052\\
\cmidrule(lr){2-4}\cmidrule(lr){5-7}\cmidrule(lr){8-10}\cmidrule(lr){11-13}
\hline
&\multicolumn{12}{c}{$\text{Survival rate at } x=5$}\\
\multicolumn{1}{r}{$\mathrm{CR=0}$}&&&&&&&&&&&&\\
$\mathrm{SurvLand}$&0.006	&0.024	&0.051
&0.005	&0.023	&0.051
&0.006	&0.027	&0.051
&0.005	&0.023	&0.050\\
$\mathrm{BinomLog}$&0.006	&0.024	&0.052
	&0.005	&0.026	&0.051
&0.006	&0.026	&0.052
	&0.006	&0.022	&0.050\\
\cmidrule(lr){2-4}\cmidrule(lr){5-7}\cmidrule(lr){8-10}\cmidrule(lr){11-13}
\multicolumn{1}{r}{$\mathrm{CR=0.2}$}&&&&&&&&&&&&\\
$\mathrm{SurvLand}$&0.006	&0.023	&0.053
&0.005	&0.024	&0.049
&0.004	&0.024	&0.051
&0.005	&0.021	&0.050\\
$\mathrm{BinomLog}$&0.005	&0.024	&0.052
	&0.006	&0.024	&0.049
&0.005	&0.023	&0.050
	&0.006	&0.025	&0.054\\
\cmidrule(lr){2-4}\cmidrule(lr){5-7}\cmidrule(lr){8-10}\cmidrule(lr){11-13}
\multicolumn{1}{r}{$\mathrm{CR=0.5}$}&&&&&&&&&&&&\\
$\mathrm{SurvLand}$&0.005	&0.025	&0.051
&0.006	&0.023	&0.049
&0.005	&0.021	&0.047
&0.004	&0.022	&0.047\\
$\mathrm{BinomLog}$&0.006	&0.024	&0.052
	&0.006	&0.022	&0.049
&0.005	&0.023	&0.048
	&0.001	&0.021	&0.046\\
\cmidrule(lr){2-4}\cmidrule(lr){5-7}\cmidrule(lr){8-10}\cmidrule(lr){11-13}
\hline
\end{tabular}
\end{table}

\begin{figure}
\begin{center}
			\scalebox{0.7}{\includegraphics[angle =0]{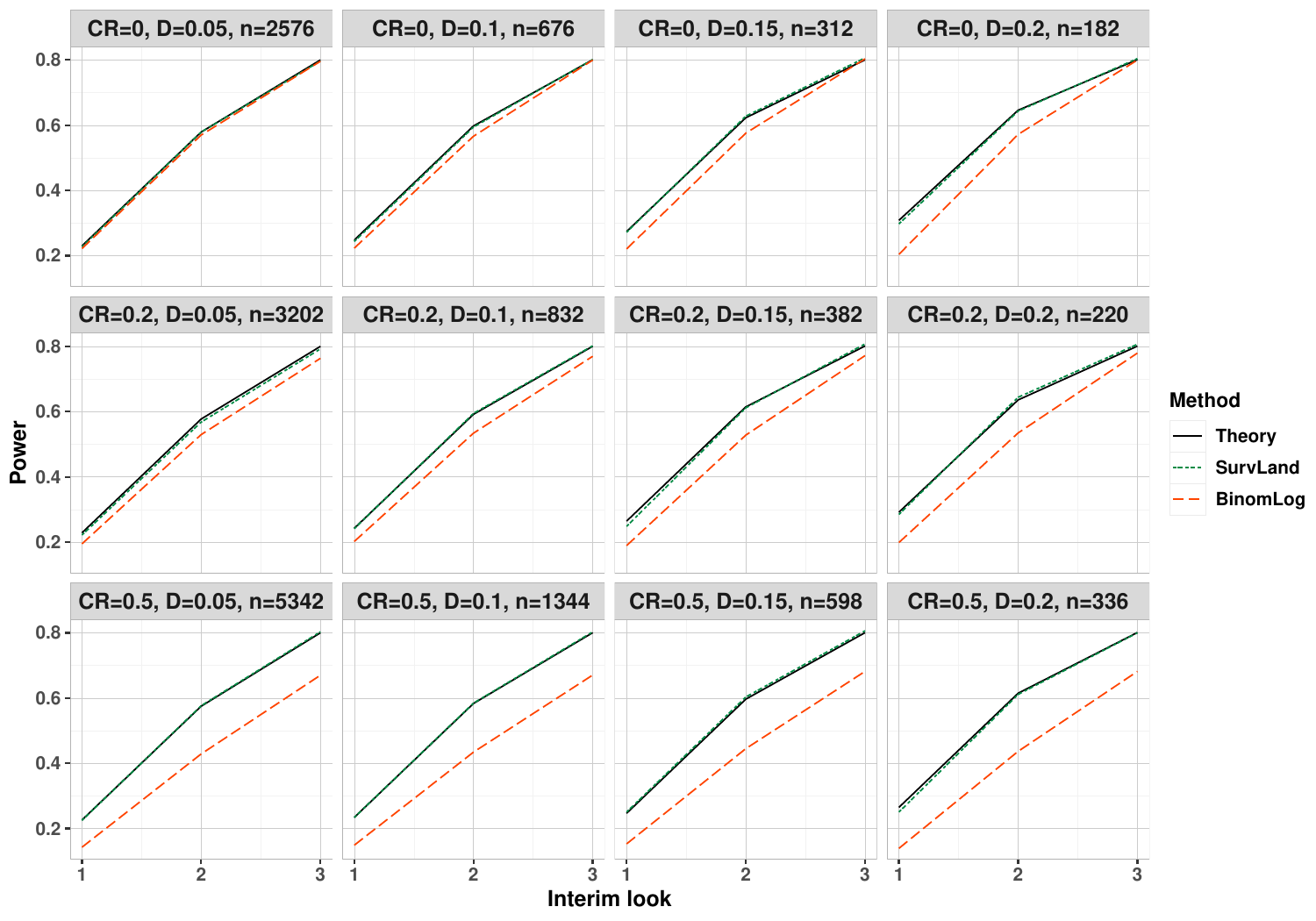}}
\caption{Empirical power comparisons between  proposed test (SurvLand) and  binomial test (BinomLog) with  theoretical power of 80\% based on survival rate at $x=1$ }\label{fig3}
\end{center}
\end{figure}

\begin{figure}
\begin{center}
			\scalebox{0.7}{\includegraphics[angle =0]{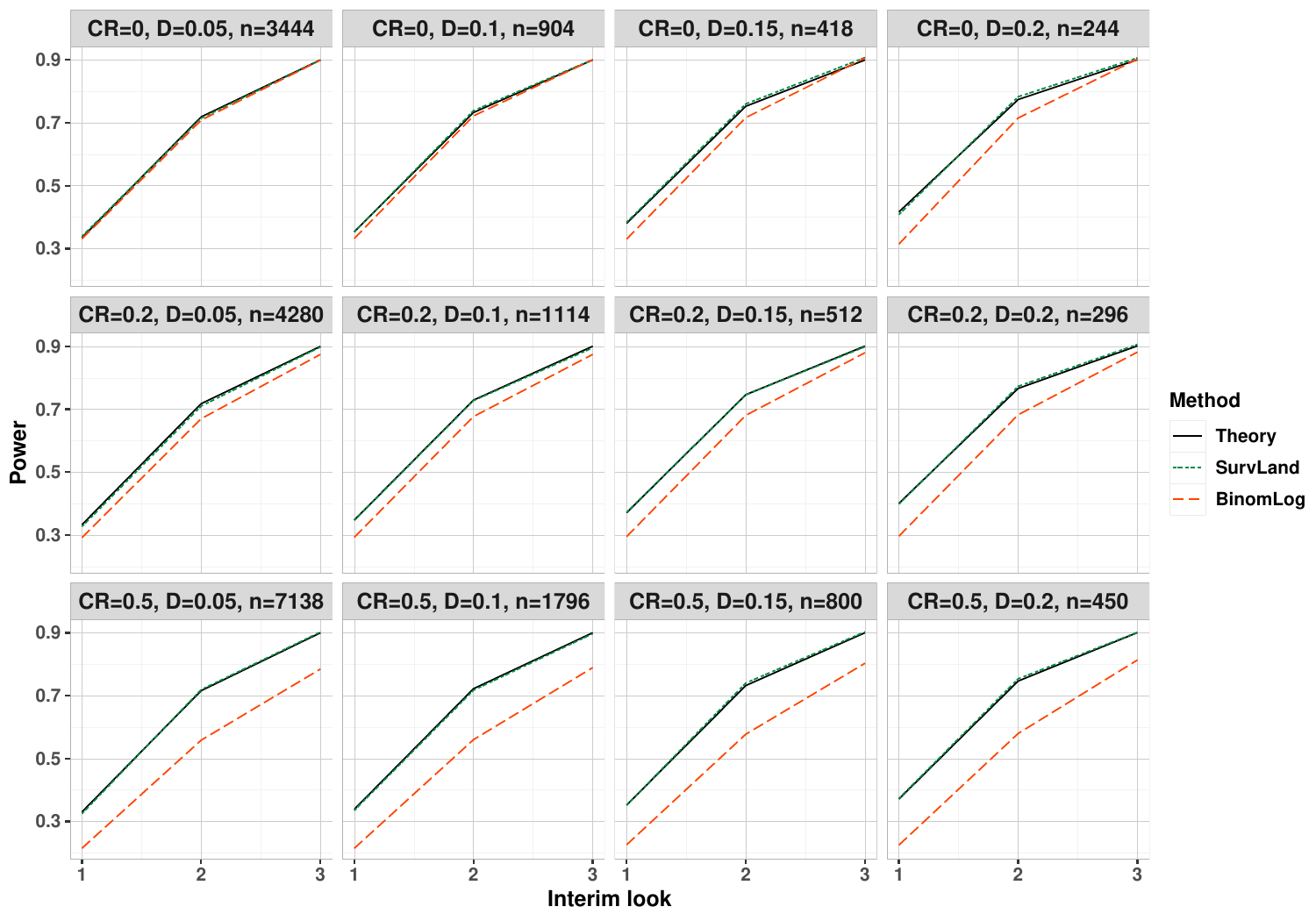}}
\caption{Empirical power comparisons between proposed test (SurvLand) and  binomial test (BinomLog) with  theoretical power of 90\% based on survival rate at $x=1$}\label{fig4}
\end{center}
\end{figure}

\begin{table}[htp]
\centering\caption{Empirical power comparisons between SurvLand and BinomLog based on survival rate at $x=1$}\label{tab2}
\scriptsize
\begin{tabular}{rrrrrrrrrrrrr}
\hline
&\multicolumn{12}{c}{Information}\\
&$\mathrm{50\%}$&$\mathrm{75\%}$&$\mathrm{100\%}$&
$\mathrm{50\%}$&$\mathrm{75\%}$&$\mathrm{100\%}$&$\mathrm{50\%}$&$\mathrm{75\%}$&$\mathrm{100\%}$&
$\mathrm{50\%}$&$\mathrm{75\%}$&$\mathrm{100\%}$\\
\cmidrule(lr){2-4}\cmidrule(lr){5-7}\cmidrule(lr){8-10}\cmidrule(lr){11-13}
&\multicolumn{3}{c}{$\mathrm{D=0.05}$}& \multicolumn{3}{c}{$\mathrm{D=0.1}$}&\multicolumn{3}{c}{$\mathrm{D=0.15}$}& \multicolumn{3}{c}{$\mathrm{D=0.2}$}\\
\hline
&\multicolumn{12}{c}{$\mathrm{Thoretical~power=80\%}$}\\
\multicolumn{1}{r}{$\mathrm{CR=0}$}&&&&&&&&&&&&\\
$\mathrm{SurvLand}$&0.227	&0.579	&0.797
&0.244	&0.594	&0.801
&0.272	&0.628	&0.807
&0.297	&0.643	&0.804\\
$\mathrm{BinomLog}$&0.222&	0.570&	0.796
	&0.223&	0.566	&0.800
&0.221	&0.576&	0.804
	&0.204&	0.572&	0.800\\
\cmidrule(lr){2-4}\cmidrule(lr){5-7}\cmidrule(lr){8-10}\cmidrule(lr){11-13}
\multicolumn{1}{r}{$\mathrm{CR=0.2}$}&&&&&&&&&&&&\\
$\mathrm{SurvLand}$&0.222	&0.568	&0.792
&0.241	&0.595	&0.802
&0.248	&0.612	&0.808
&0.276	&0.645	&0.810\\
$\mathrm{BinomLog}$&0.195	&0.530&	0.763
	&0.202	&0.534&	0.770
&0.189	&0.528	&0.772
	&0.199	&0.535	&0.780\\
\cmidrule(lr){2-4}\cmidrule(lr){5-7}\cmidrule(lr){8-10}\cmidrule(lr){11-13}
\multicolumn{1}{r}{$\mathrm{CR=0.5}$}&&&&&&&&&&&&\\
$\mathrm{SurvLand}$&0.225	&0.576	&0.803
&0.234	&0.584	&0.803
&0.250	&0.603	&0.806
&0.250	&0.611	&0.800\\
$\mathrm{BinomLog}$&0.143	&0.429	&0.670
	&0.149	&0.434	&0.671
	&0.153	&0.445	&0.682
	&0.139	&0.437	&0.682\\
\cmidrule(lr){2-4}\cmidrule(lr){5-7}\cmidrule(lr){8-10}\cmidrule(lr){11-13}
\hline
&\multicolumn{12}{c}{$\mathrm{Thoretical~power=90\%}$}\\
\multicolumn{1}{r}{$\mathrm{CR=0}$}&&&&&&&&&&&&\\
$\mathrm{SurvLand}$&0.339	&0.715	&0.900
&0.353	&0.739	&0.901
&0.383	&0.760	&0.908
&0.408	&0.783	&0.907\\
$\mathrm{BinomLog}$&0.331	&0.709	&0.899
	&0.332	&0.722	&0.901
&0.330	&0.717	&0.906
	&0.314	&0.715	&0.903\\
\cmidrule(lr){2-4}\cmidrule(lr){5-7}\cmidrule(lr){8-10}\cmidrule(lr){11-13}
\multicolumn{1}{r}{$\mathrm{CR=0.2}$}&&&&&&&&&&&&\\
$\mathrm{SurvLand}$&0.327	&0.710	&0.898
&0.347	&0.728	&0.896
&0.372	&0.746	&0.899
&0.398	&0.773	&0.906\\
$\mathrm{BinomLog}$&0.292	&0.670	&0.874
	&0.293	&0.677	&0.875
&0.295	&0.681	&0.880
	&0.296	&0.682	&0.882\\
\cmidrule(lr){2-4}\cmidrule(lr){5-7}\cmidrule(lr){8-10}\cmidrule(lr){11-13}
\multicolumn{1}{r}{$\mathrm{CR=0.5}$}&&&&&&&&&&&&\\
$\mathrm{SurvLand}$&0.325	&0.719	&0.903
&0.335	&0.717	&0.897
&0.351	&0.740	&0.903
&0.373	&0.754	&0.902\\
$\mathrm{BinomLog}$&0.215	&0.558	&0.785
	&0.214	&0.560	&0.789
&0.226	&0.577	&0.803
	&0.225	&0.580	&0.813\\
\cmidrule(lr){2-4}\cmidrule(lr){5-7}\cmidrule(lr){8-10}\cmidrule(lr){11-13}
\hline
\end{tabular}
\end{table}

\begin{figure}
\begin{center}
			\scalebox{0.7}{\includegraphics[angle =0]{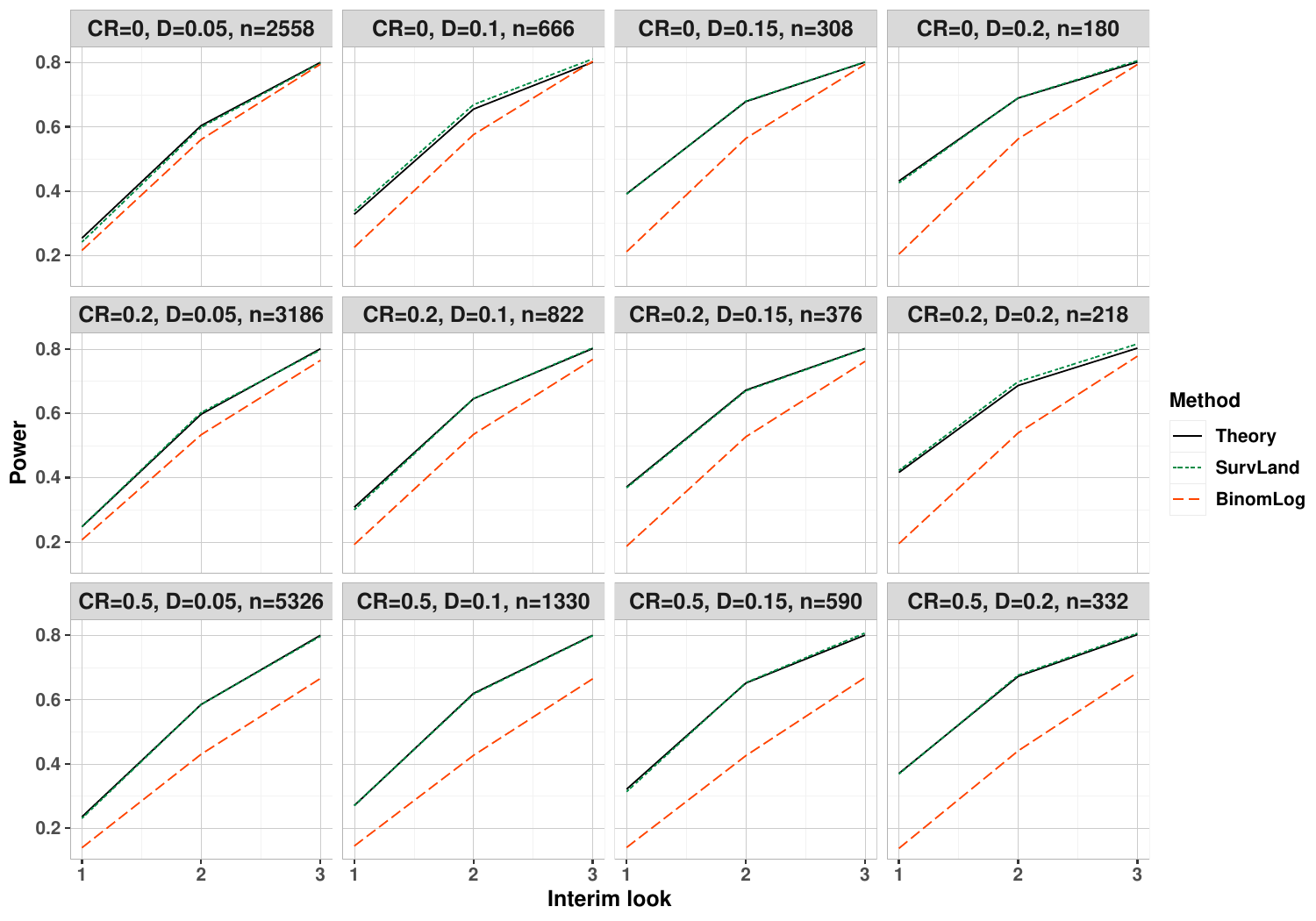}}
\caption{Empirical power comparisons between proposed test (SurvLand) and  binomial test (BinomLog) with theoretical power of 80\% based on survival rate at $x=5$ }\label{fig5}
\end{center}
\end{figure}

\begin{figure}
\begin{center}
			\scalebox{0.7}{\includegraphics[angle =0]{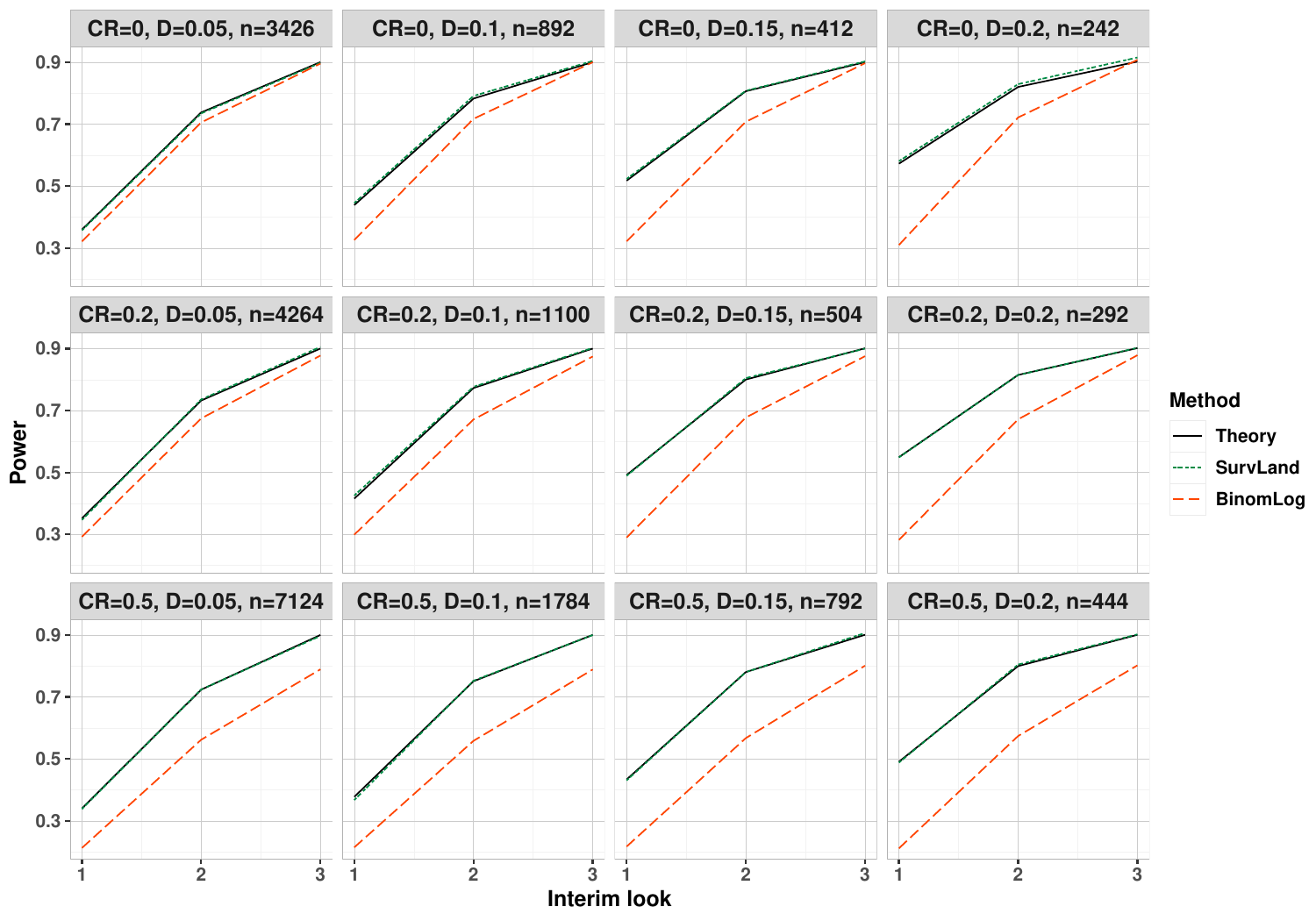}}
\caption{Empirical power comparisons between proposed test (SurvLand) and  binomial test (BinomLog) with theoretical power of 90\% based on survival rate at $x=5$}\label{fig6}
\end{center}
\end{figure}

\begin{table}[htp]
\centering\caption{Empirical power comparisons between SurvLand and BinomLog based on survival rate at $x=5$}\label{tab3}
\scriptsize
\begin{tabular}{rrrrrrrrrrrrr}
\hline
&\multicolumn{12}{c}{Information}\\
&$\mathrm{50\%}$&$\mathrm{75\%}$&$\mathrm{100\%}$&
$\mathrm{50\%}$&$\mathrm{75\%}$&$\mathrm{100\%}$&$\mathrm{50\%}$&$\mathrm{75\%}$&$\mathrm{100\%}$&
$\mathrm{50\%}$&$\mathrm{75\%}$&$\mathrm{100\%}$\\
\cmidrule(lr){2-4}\cmidrule(lr){5-7}\cmidrule(lr){8-10}\cmidrule(lr){11-13}
&\multicolumn{3}{c}{$\mathrm{D=0.05}$}& \multicolumn{3}{c}{$\mathrm{D=0.1}$}&\multicolumn{3}{c}{$\mathrm{D=0.15}$}& \multicolumn{3}{c}{$\mathrm{D=0.2}$}\\
\hline
&\multicolumn{12}{c}{$\mathrm{Thoretical~power=80\%}$}\\
\multicolumn{1}{r}{$\mathrm{CR=0}$}&&&&&&&&&&&&\\
$\mathrm{SurvLand}$&0.242	&0.599	&0.798
&0.338	&0.669	&0.811
&0.391	&0.680	&0.802
&0.426	&0.690	&0.805\\
$\mathrm{BinomLog}$&0.216	&0.560	&0.795
	&0.225	&0.576	&0.803
&0.212	&0.564	&0.794
	&0.204	&0.562	&0.794\\
\cmidrule(lr){2-4}\cmidrule(lr){5-7}\cmidrule(lr){8-10}\cmidrule(lr){11-13}
\multicolumn{1}{r}{$\mathrm{CR=0.2}$}&&&&&&&&&&&&\\
$\mathrm{SurvLand}$&0.248	&0.602	&0.797
&0.300	&0.645	&0.803
&0.368	&0.669	&0.800
&0.422	&0.698	&0.815\\
$\mathrm{BinomLog}$&0.206	&0.533	&0.764
	&0.192	&0.534	&0.767
&0.186	&0.527	&0.761
	&0.195	&0.539	&0.777\\
\cmidrule(lr){2-4}\cmidrule(lr){5-7}\cmidrule(lr){8-10}\cmidrule(lr){11-13}
\multicolumn{1}{r}{$\mathrm{CR=0.5}$}&&&&&&&&&&&&\\
$\mathrm{SurvLand}$&0.231	&0.585	&0.797
&0.271	&0.617	&0.799
&0.314	&0.653	&0.807
&0.369	&0.676	&0.806\\
$\mathrm{BinomLog}$&0.140	&0.4303	&0.666
	&0.145	&0.427	&0.665
	&0.140	&0.426	&0.669
	&0.138	&0.441	&0.684\\
\cmidrule(lr){2-4}\cmidrule(lr){5-7}\cmidrule(lr){8-10}\cmidrule(lr){11-13}
\hline
&\multicolumn{12}{c}{$\mathrm{Thoretical~power=90\%}$}\\
\multicolumn{1}{r}{$\mathrm{CR=0}$}&&&&&&&&&&&&\\
$\mathrm{SurvLand}$&0.357	&0.735	&0.898
&0.445	&0.791	&0.904
&0.523	&0.808	&0.902
&0.581	&0.829	&0.915\\
$\mathrm{BinomLog}$&0.321	&0.705	&0.896
	&0.326	&0.717	&0.900
&0.322	&0.707	&0.897
	&0.31	&0.722	&0.907\\
\cmidrule(lr){2-4}\cmidrule(lr){5-7}\cmidrule(lr){8-10}\cmidrule(lr){11-13}
\multicolumn{1}{r}{$\mathrm{CR=0.2}$}&&&&&&&&&&&&\\
$\mathrm{SurvLand}$&0.347	&0.736	&0.906
&0.426	&0.777	&0.902
&0.490	&0.805	&0.900
&0.549	&0.815	&0.902\\
$\mathrm{BinomLog}$&0.293	&0.675	&0.877
	&0.299	&0.671	&0.875
&0.290	&0.678	&0.875
	&0.283	&0.672	&0.879\\
\cmidrule(lr){2-4}\cmidrule(lr){5-7}\cmidrule(lr){8-10}\cmidrule(lr){11-13}
\multicolumn{1}{r}{$\mathrm{CR=0.5}$}&&&&&&&&&&&&\\
$\mathrm{SurvLand}$&0.338	&0.724	&0.897
&0.367	&0.753	&0.900
&0.431	&0.780	&0.906
&0.488	&0.805	&0.902\\
$\mathrm{BinomLog}$&0.213	&0.562	&0.789
	&0.215	&0.558	&0.789
&0.217	&0.567	&0.801
	&0.212	&0.574	&0.802\\
\cmidrule(lr){2-4}\cmidrule(lr){5-7}\cmidrule(lr){8-10}\cmidrule(lr){11-13}
\hline
\end{tabular}
\end{table}

On the other hand, we assess the added value of the proposed sequential test relative to the corresponding fixed-design test using the sample sizes obtained from the preceding power study, rather than from simulated data. Tables~\ref{tab4} and~\ref{tab5} compare the two designs in terms of the required number of enrolled patients and the total study duration, including both accrual and follow-up periods. Table~\ref{tab4} presents results for the specific time point $x=1$, whereas Table~\ref{tab5} reports results for $x=5$. In both tables, the number of enrolled patients and the study duration are reported for each combination of $1-\beta$, $\mathrm{CR}$, and $\mathrm{D}$. For the sequential test, the expected number of enrolled patients and the expected study duration—weighted by power across monitoring looks—are used for comparison.

The proposed sequential design generally yields shorter study durations than the corresponding fixed design for both specific times $x=1$ and $x=5$ (Tables~\ref{tab4} and~\ref{tab5}). However, its ability to reduce the required sample size is more pronounced for $x=1$ than for $x=5$. This occurs because, when the landmark is $x=5$, an interim monitoring time of the form $t = 5 +$ (a partial enrollment period) may exceed the total enrollment duration in some settings, thereby eliminating opportunities for early stopping and consequent sample size reduction. For instance, when the total sample size is below 600, the enrollment duration is less than 5 under an accrual rate of $r=120$, implying that all patients have already been enrolled by the first monitoring time $t>5$.

\begin{table}[htp]
\centering\caption{Enrolled subject number and study duration comparisons between  proposed sequential design and corresponding fixed design based on survival rate at $x=1$}\label{tab4}
\scriptsize
\begin{tabular}{rcccccccc}
\hline
&\multicolumn{8}{c}{$\mathrm{Survival~~rate~~difference}$}\\
&0.05&0.1&0.15&0.2&0.05&0.1&0.15&0.2\\
\cmidrule(lr){2-5}\cmidrule(lr){6-9}
&\multicolumn{4}{c}{$\mathrm{Enrolled~~number}$}&\multicolumn{4}{c}{$\mathrm{Study~~duration}$}\\
\hline
&\multicolumn{8}{c}{$\mathrm{Thoretical~power=80\%}$}\\
\multicolumn{1}{r}{$\mathrm{CR=0}$} & & & & & & & &\\
$\mathrm{Sequential}$&2126&606&304&182&18.13&5.44&3.02&2.15\\
$\mathrm{Fixed}$&2504&660&308&180&21.87&6.50&3.57&2.50\\
\cmidrule(lr){2-5}\cmidrule(lr){6-9}
\multicolumn{1}{r}{$\mathrm{CR=0.2}$} & & & & & & & &\\
$\mathrm{Sequential}$&2628&730&364&220&22.31&6.49&3.48&2.41\\
$\mathrm{Fixed}$&3112&812&374&218&26.93&7.77&4.12&2.82\\
\cmidrule(lr){2-5}\cmidrule(lr){6-9}
\multicolumn{1}{r}{$\mathrm{CR=0.5}$} & & & & & & & &\\
$\mathrm{Sequential}$&4342&1140&544&324&36.60&9.91&4.93&3.18\\
$\mathrm{Fixed}$&5186&1308&584&330&44.22&11.90&5.87&3.75\\
\cmidrule(lr){2-5}\cmidrule(lr){6-9}
\hline
&\multicolumn{8}{c}{$\mathrm{Thoretical~power=90\%}$}\\
\multicolumn{1}{r}{$\mathrm{CR=0}$} & & & & & & & &\\
$\mathrm{Sequential}$&2624&748&386&244&22.14&6.49&3.50&2.43\\
$\mathrm{Fixed}$&3352&884&412&242&28.93&8.37&4.43&3.02\\
\cmidrule(lr){2-5}\cmidrule(lr){6-9}
\multicolumn{1}{r}{$\mathrm{CR=0.2}$} & & & & & & & &\\
$\mathrm{Sequential}$&3244&902&460&286&27.30&7.78&4.07&2.75\\
$\mathrm{Fixed}$&4166&1086&500&290&35.72&10.05&5.17&3.42\\
\cmidrule(lr){2-5}\cmidrule(lr){6-9}
\multicolumn{1}{r}{$\mathrm{CR=0.5}$} & & & & & & & &\\
$\mathrm{Sequential}$&5358&1408&672&421&44.93&12.00&5.86&3.70\\
$\mathrm{Fixed}$&6942&1750&782&440&58.85&15.58&7.52&4.67\\
\cmidrule(lr){2-5}\cmidrule(lr){6-9}
\hline
\end{tabular}
\end{table}

\begin{table}[htp]
\centering\caption{Enrolled subject number and study duration comparisons between proposed sequential design and corresponding fixed design based on survival rate at $x=5$}\label{tab5}
\scriptsize
\begin{tabular}{rcccccccc}
\hline
&\multicolumn{8}{c}{$\mathrm{Survival~~rate~~difference}$}\\
&0.05&0.1&0.15&0.2&0.05&0.1&0.15&0.2\\
\cmidrule(lr){2-5}\cmidrule(lr){6-9}
&\multicolumn{4}{c}{$\mathrm{Enrolled~~number}$}&\multicolumn{4}{c}{$\mathrm{Study~~duration}$}\\
\hline
&\multicolumn{8}{c}{$\mathrm{Thoretical~power=80\%}$}\\
\multicolumn{1}{r}{$\mathrm{CR=0}$} & & & & & & & &\\
$\mathrm{Sequential}$&2372&666&308&180&21.75&9.19&6.88&6.08\\
$\mathrm{Fixed}$&2504&660&308&180&25.87&10.50&7.57&6.50\\
\cmidrule(lr){2-5}\cmidrule(lr){6-9}
\multicolumn{1}{r}{$\mathrm{CR=0.2}$} & & & & & & & &\\
$\mathrm{Sequential}$&2872&822&376&218&25.95&10.22&7.32&6.32\\
$\mathrm{Fixed}$&3112&812&374&218&30.93&11.77&8.12&6.82\\
\cmidrule(lr){2-5}\cmidrule(lr){6-9}
\multicolumn{1}{r}{$\mathrm{CR=0.5}$} & & & & & & & &\\
$\mathrm{Sequential}$&4586&1314&590&332&40.28&13.62&8.72&7.05\\
$\mathrm{Fixed}$&5186&1308&584&330&48.22&15.90&9.87&7.75\\
\cmidrule(lr){2-5}\cmidrule(lr){6-9}
\hline
&\multicolumn{8}{c}{$\mathrm{Thoretical~power=90\%}$}\\
\multicolumn{1}{r}{$\mathrm{CR=0}$} & & & & & & & &\\
$\mathrm{Sequential}$&2930&892&412&242&25.72&10.16&7.30&6.31\\
$\mathrm{Fixed}$&3352&884&412&242&32.93&12.37&8.43&7.02\\
\cmidrule(lr){2-5}\cmidrule(lr){6-9}
\multicolumn{1}{r}{$\mathrm{CR=0.2}$} & & & & & & & &\\
$\mathrm{Sequential}$&3548&1100&504&292&30.90&11.44&7.84&6.60\\
$\mathrm{Fixed}$&4166&1086&500&290&39.72&14.05&9.17&7.42\\
\cmidrule(lr){2-5}\cmidrule(lr){6-9}
\multicolumn{1}{r}{$\mathrm{CR=0.5}$} & & & & & & & &\\
$\mathrm{Sequential}$&5664&1674&792&444&48.58&15.67&9.60&7.51\\
$\mathrm{Fixed}$&6942&1750&782&440&62.85&19.58&11.52&8.67\\
\cmidrule(lr){2-5}\cmidrule(lr){6-9}
\hline
\end{tabular}
\end{table}


\section{Real Life Examples}
\subsection{Renal Mass Example}
Using our shiny application \href{https://ynu0gt-chenxi-yu.shinyapps.io/shinyapp/}
{\textcolor{blue}{Sample Size/Power Calculation of Group Sequential Design in Phase III Clinical Trials}} with the provided motivating example, we assume a two-sided type I error of 0.05, power of 0.85, an IFS rate at 5-years of 25\% in the control arm, equal allocation between the two arms, a hypothesized 10\% difference between the two arms, exponentially distributed even times, and three analyses planned at at 50\%, 75\% and 100\% total information. Inputting all these parameters in the R shiny application, the average required sample size is 756 patients. On the other hand, if we use the binomial approach, 770 patients on average are needed to detect the anticipated difference.  Although the average sample sizes are calculated as the power weighted sums of sample sizes across interim analyses, both average numbers are actually the total sample sizes, since all patients have already enrolled at the first interim analysis point. Our method would be able to reduce sample size more substantially for a shorter time. We generate a synthetic dataset based on the experimental design settings to simulate the study. At the first interim analysis, the test statistic 
$|Z(5, t_1)| = 1.842 < 2.772 = c_1$, which does not meet the threshold for rejection. However, at the second interim look, $|Z(5, t_2)| = 2.568 > 2.280 = c_2$, 
consequently, the null hypothesis is rejected.

\subsection{Prostate Cancer Example}
We are interested in designing a randomized phase II trial where metastatic prostate cancer patients will be randomized to an androgen receptor pathway inhibitor (ARPI, standard of care) or a radio-ligand. The primary endpoint is radiographic rPFS at 12 months, which is defined as the interval from the date of random assignment to date of radiographic progression or death, whichever occurs first. Based on observed data, the rPFS rate at 12 months in the standard of care arm is 10\%. The hypothesized  difference between the two arms is 15\%. We assume a 2:1 allocation ratio  between the SOC vs. the experimental arm, a one-sided type I error rate of 0.10, accrual rate of 3 patients/month, and the follow-up period of 12 months. A power of 0.90 is desired and two analyses are planned at 50\% and 100\% total information. 
Using our shiny application program, we  enter the above values and choose the exponential distribution, the program yields a average sample size of 161.5 patients. Keeping the same assumptions as above, but using the binomial test between two arms, the target average sample size is 164.1.  Following the  the preceding example, we generate a synthetic dataset to simulate the study. The null hypothesis is rejected at the first interim analysis, as the test statistic $Z(12, t_1) = 2.384$ exceeds the critical value $c_1 = 2.054$. \\

\section{Discussion}
In this article, we present a novel approach for sample size determination and type I error control in both fixed and sequential designs. Milestone survival estimates the probability of remaining event-free at a prespecified time, incorporating all follow-up data up to that point \cite{Chen:2015}. Unlike landmark analysis \cite{Anderson:1983}, which excludes patients who do not survive to the chosen time, milestone survival provides an unbiased estimate of treatment effect while retaining early events. This approach facilitates standardized comparisons across arms, supports meta-analytic evaluations, and enables earlier assessment of efficacy without discarding valuable data.\\

Most investigators use the binomial test when designing trials that compare milestone survival between two groups, owing to its simplicity and ease of implementation. This approach effectively reduces the time-to-event outcome to a binary indicator at a fixed time point and relies on the assumption that censoring prior to the milestone time is independent and non-informative. In practice, this assumption is often unrealistic, particularly when censoring occurs before the milestone or differs between treatment arms. As a result, binomial-based analyses may suffer from loss of efficiency, reduced power, or biased inference compared with methods that incorporate all available follow-up information, such as the approach proposed here.\\

Using real life examples in the design of a renal and a prostate cancer  trials, we show that our method maintains type I error near the nominal level and achieves lower sample sizes than the conventional binomial test. Simulations across different hypothesized effect sizes, censoring rates, and nominal powers confirmed consistent performance for both fixed and sequential designs. Across multiple interim looks, our method consistently produced smaller type I error and higher power, regardless of the underlying failure time distribution or error spending function, demonstrating robustness and practical utility.\\

Our approach is equally applicable to randomized phase II trials, which often involve smaller sample sizes and exploratory endpoints \cite{neal:2021}. It is also well-suited for immunotherapy trials or other settings where non-proportional hazards are expected between treatment arms \cite{{Mukhopadhyay:2022},{Apolo:2025}}, supporting reliable study design even when treatment effects vary over time, delayed responses occur, or survival curves cross—a scenario commonly observed in immunotherapy studies. When properly validated, milestone survival endpoints can serve as primary endpoints. Several meta-analyses \cite{{sargent:2005},{Papamichael:2016}, {vanderVoort:2021}, {halabi:2024}}, have emphasized the value of evaluating time-to-event outcomes at fixed time points, showing that milestone endpoints, such as rPFS, can act as meaningful surrogates for OS and inform trial design, interpretation, and regulatory decision-making.\\

We have also developed an R Shiny application to facilitate practical implementation, allowing users to specify key design parameters -- including hypothesized effect, censoring, power, type I error, interim analyses, and survival distribution (exponential or Weibull) and compute the required sample size efficiently.\\

The primary limitation is the need to pre-specify an appropriate milestone survival. The choice of milestone can influence the estimated treatment effect and required sample size. In our CALGB 90203 trial, selecting a milestone involved careful consideration of dropout and early events. Investigators can mitigate these concerns by using historical data, performing sensitivity analyses across plausible milestones, and ensuring sufficient follow-up to capture outcomes at the selected time.\\

In conclusion, our method provides a flexible framework for milestone-based trial design, overcoming key limitations of conventional approach and providing savings in the required sample size. By incorporating all available follow-up and accommodating non-proportional hazards, it offers investigators a practical, efficient, and interpretable approach for designing modern clinical trials.


\vspace*{-8pt}

\section*{Acknowledgment}

This research was supported in part by the United States Army Medical Research awards HT9425-23-1-0393 and HT9425-25-1-0623, the National Institutes of Health Grants R21 CA263950, R01 LM013352, R01 CA256157, R01 CA249279, the Food and Drug Administration (FDA) Award U01FD007857 of the U.S. Department of Health and Human Services (HHS), and the Prostate Cancer Foundation. All analyses and conclusions in this manuscript are the sole responsibility of the authors and do not necessarily reflect the opinions or views of the NCTN, the NCORP, the NCI or the FDA. 

\section*{Data Availability}
 
 No data were utilized in this manuscript as we conducted simulations. The R shiny app is available \href{https://ynu0gt-chenxi-yu.shinyapps.io/shinyapp/}{Sample Size/Power Calculation of Group Sequential Design in Phase III Clinical Trials}.

\bibliographystyle{natbib}	
\bibliography{ref}

\renewcommand{\thefigure}{S\arabic{figure}}
\setcounter{figure}{0}  
\newpage
\appendix
\renewcommand{\thefigure}{S\arabic{figure}}
\setcounter{figure}{0}  

\section*{Supplementary Materials}
\addcontentsline{toc}{section}{Supplementary Materials}

This supplementary document presents additional simulation results for three combinations of event time distributions and alpha spending approaches. Following Section 3 of the main paper, we evaluate the type I error under the null hypothesis $H_0: S_1(x) = S_2(x) = 0.25$ and the power under the alternative hypothesis $H_A: S_1(x) = 0.25$ and $S_2(x) = 0.25 + D$, where $D = 0.05, 0.1, 0.15, 0.2$ and $x = 1, 5$.

Other than the choice of the event time distribution and the alpha spending approach, all simulation settings in this document are identical to those described in Section 3 of the main paper.

\newpage

\begin{figure}[htp]
\begin{center}
			\scalebox{0.7}{\includegraphics[angle =0]{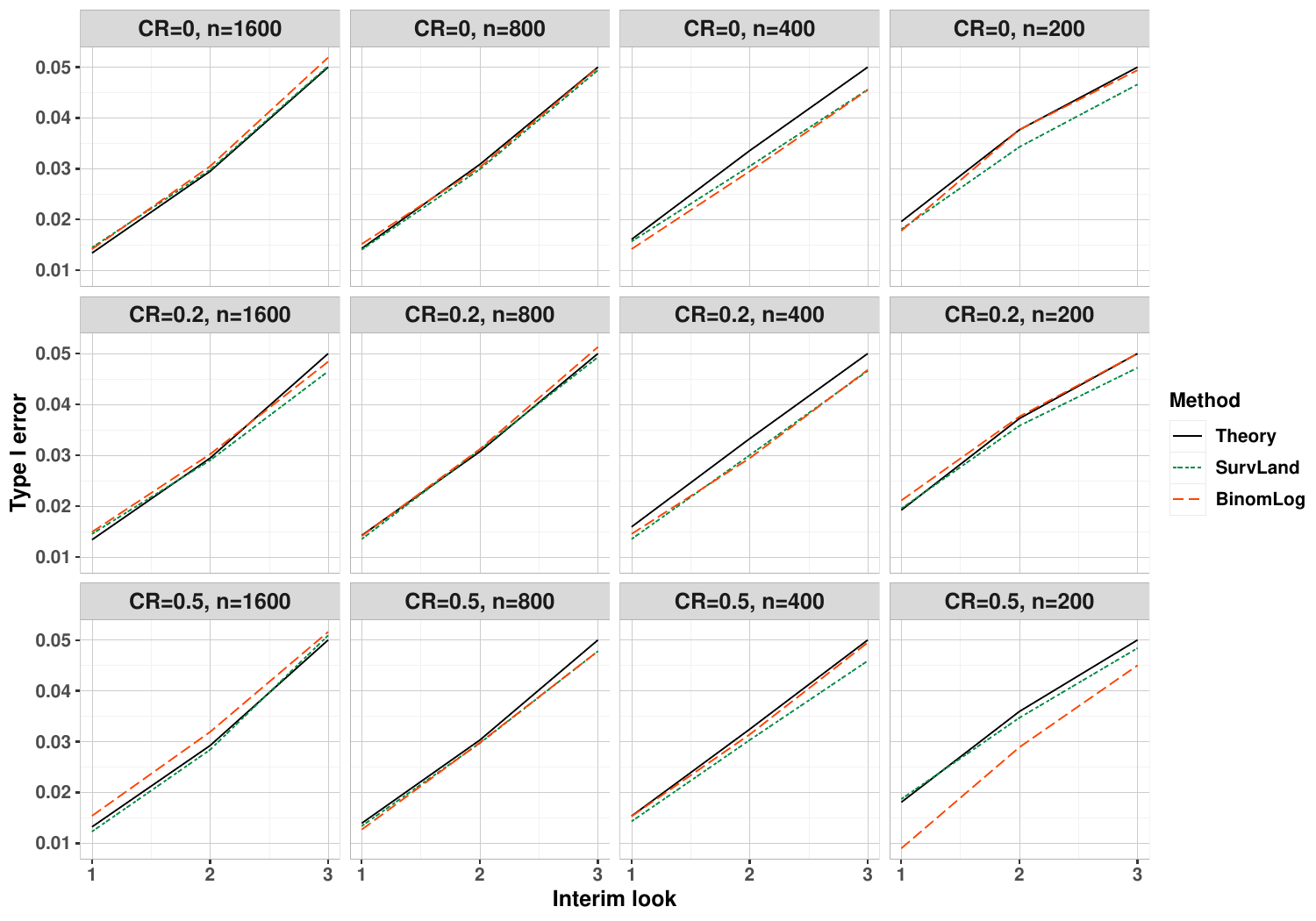}}
\caption{Type I error comparisons between proposed test (SurvLand) and binomial test (BinomLog) based on  survival rate at $x=1$.
}\label{Sfig1}
\end{center}
\end{figure}

\newpage
\begin{figure}[htp]
\begin{center}
			\scalebox{0.7}{\includegraphics[angle =0]{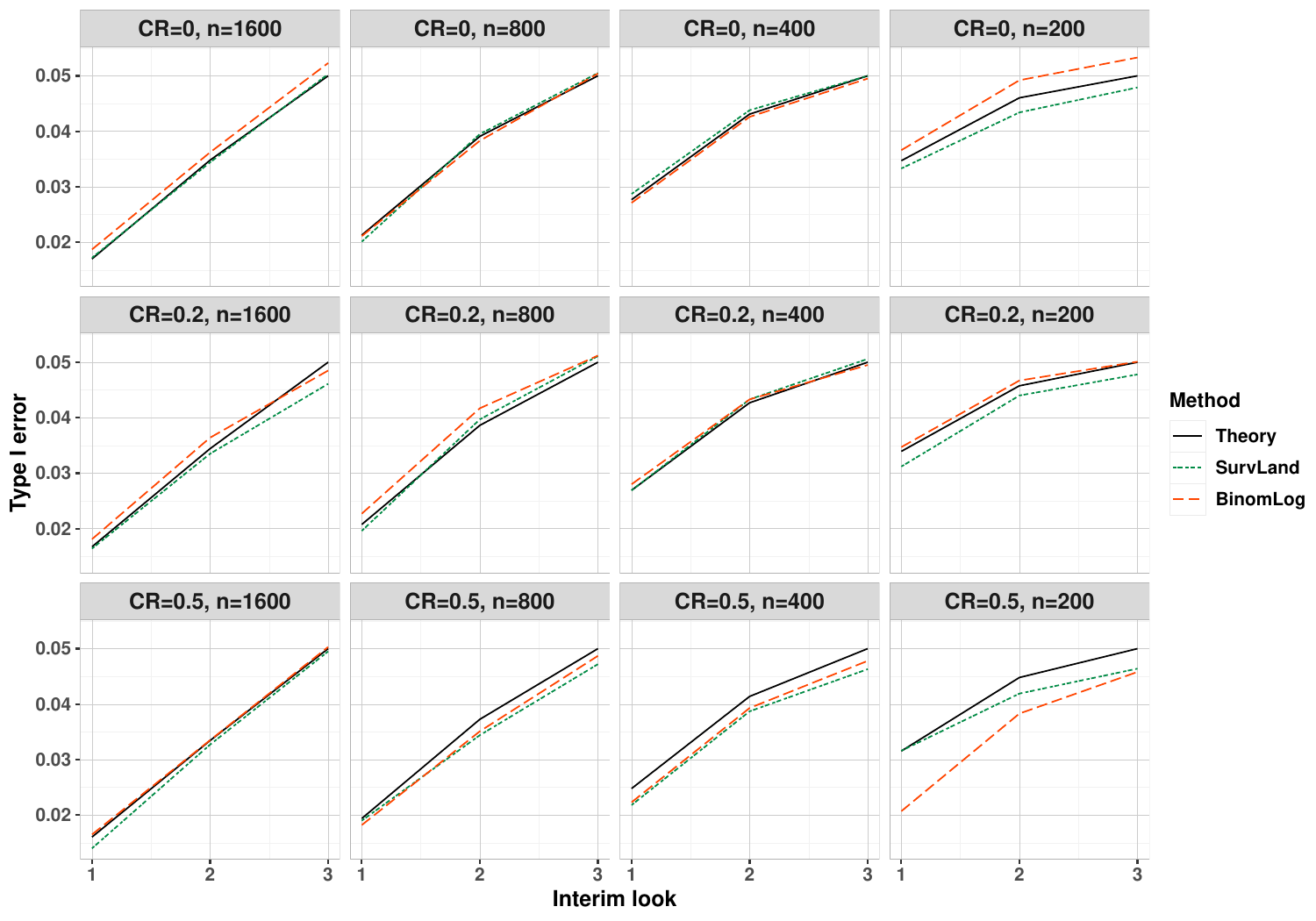}}
\caption{Type I error comparisons between proposed test (SurvLand) and binomial test  (BinomLog) based on survival rate at $x=5$}
\label{Sfig2}
\end{center}
\end{figure}

\newpage
\begin{figure}[htp]
\begin{center}
			\scalebox{0.7}{\includegraphics[angle =0]{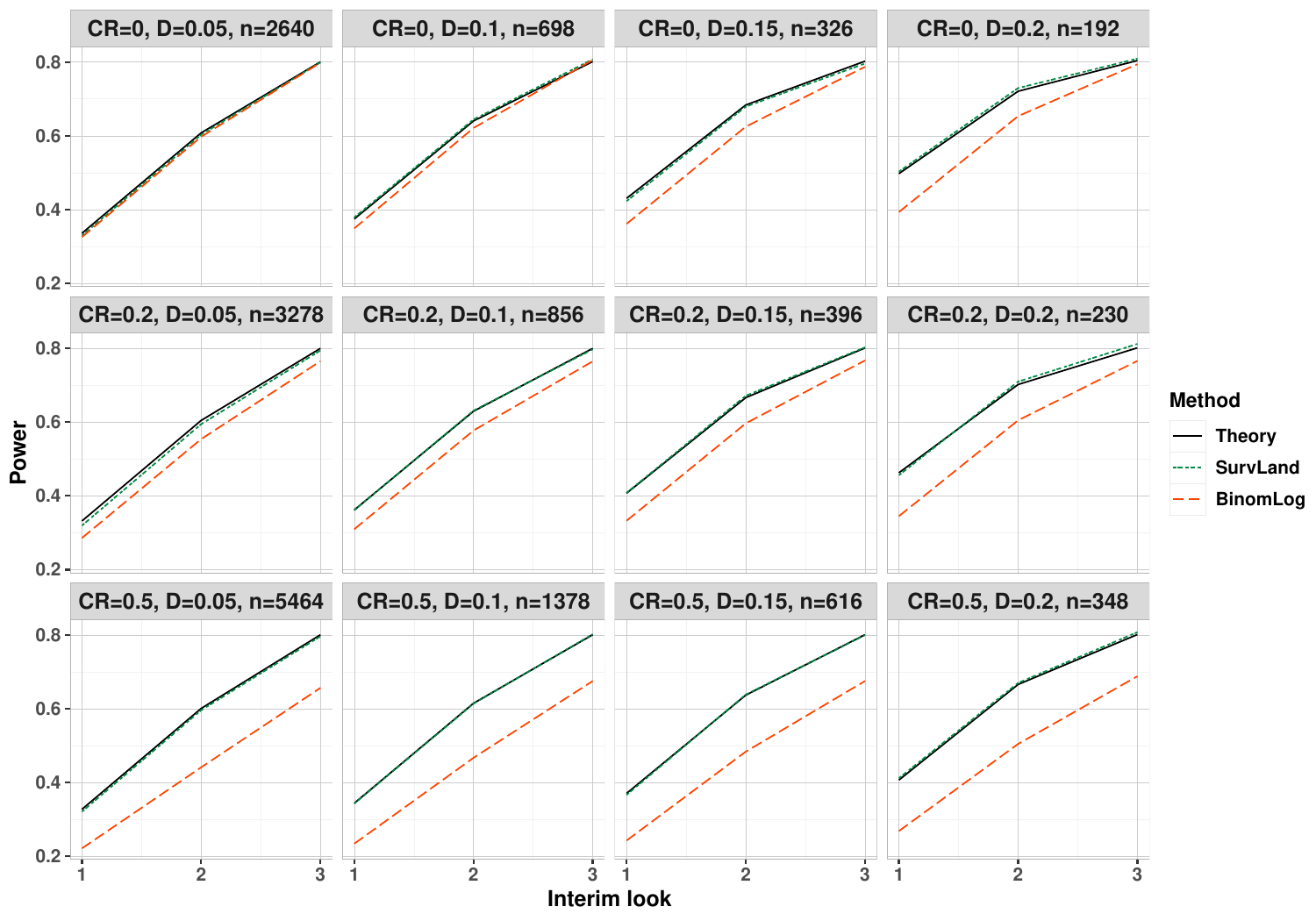}}
\caption{Empirical power comparisons between  proposed test (SurvLand) and  binomial test (BinomLog) with  theoretical power of 80\% based on survival rate at $x=1$ }
\label{Sfig3}
\end{center}
\end{figure}

\newpage
\begin{figure}[htp]
\begin{center}
			\scalebox{0.7}{\includegraphics[angle =0]{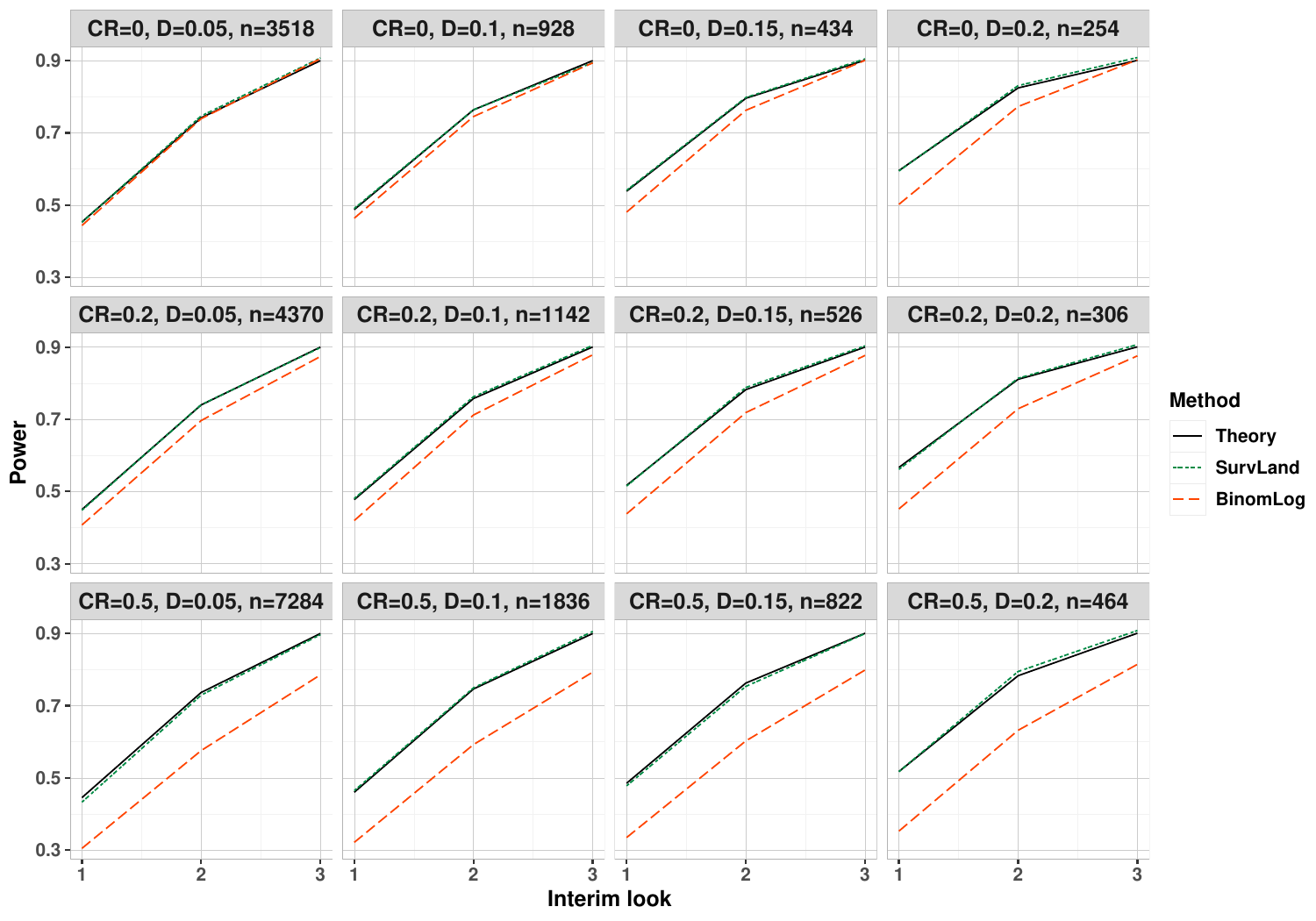}}
\caption{Empirical power comparisons between proposed test (SurvLand) and  binomial test (BinomLog) with  theoretical power of 90\% based on survival rate at $x=1$}
\label{Sfig4}
\end{center}
\end{figure}

\newpage
\begin{figure}[htp]
\begin{center}
			\scalebox{0.7}{\includegraphics[angle =0]{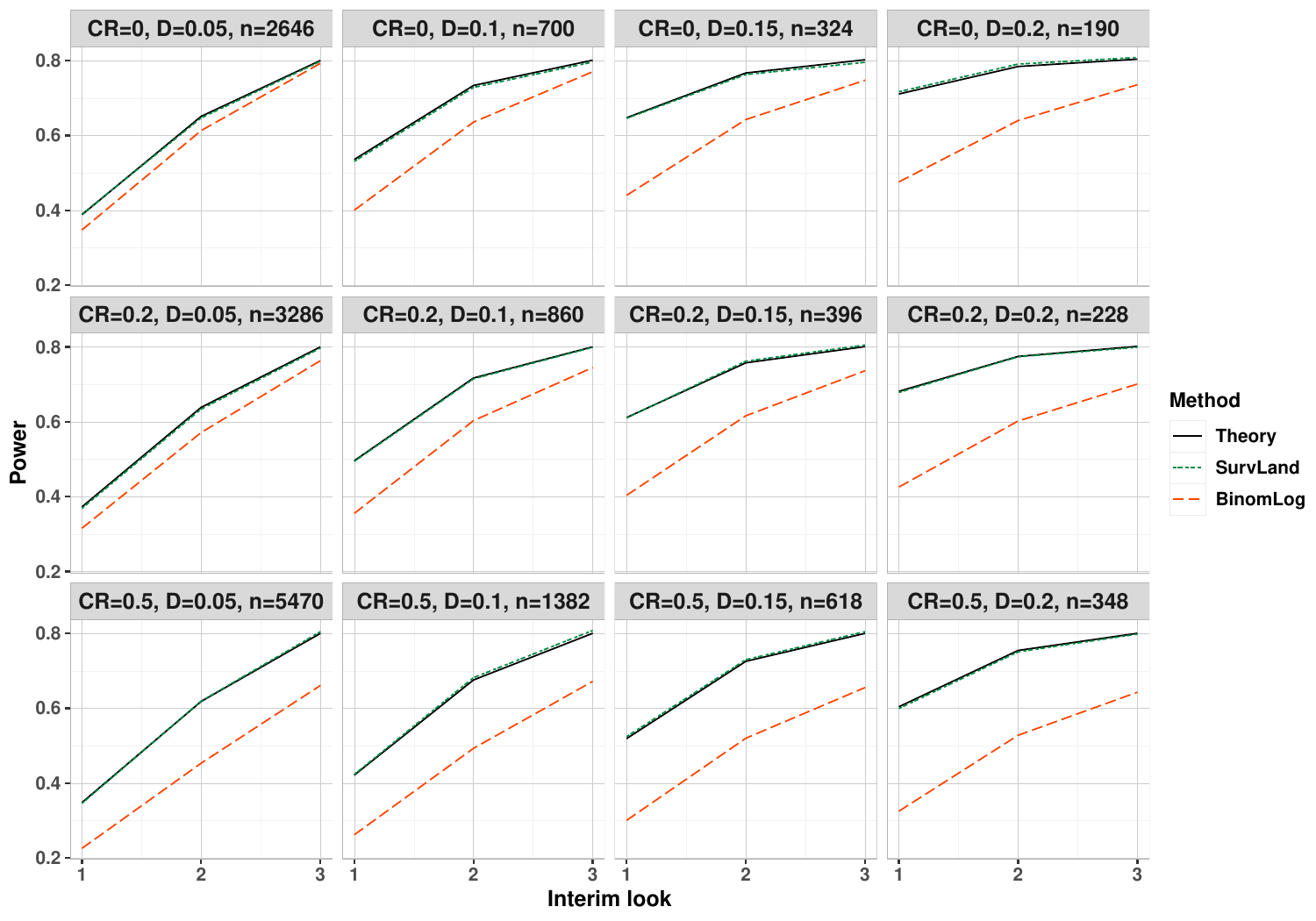}}
\caption{Empirical power comparisons between  proposed test (SurvLand) and  binomial test (BinomLog) with  theoretical power of 80\% based on survival rate at $x=5$ }
\label{Sfig5}
\end{center}
\end{figure}

\newpage
\begin{figure}[htp]
\begin{center}
			\scalebox{0.7}{\includegraphics[angle =0]{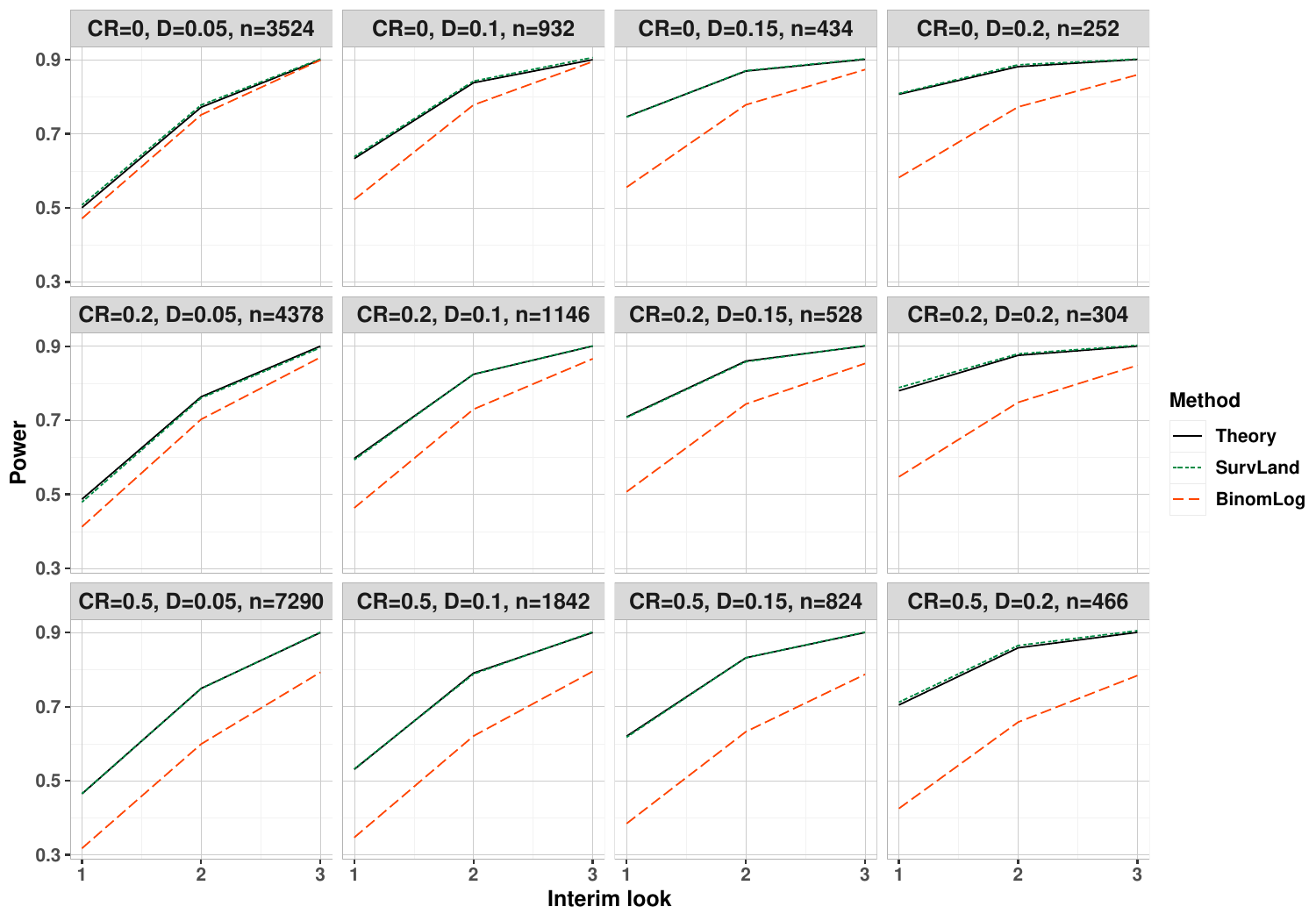}}
\caption{Empirical power comparisons between proposed test (SurvLand) and  binomial test (BinomLog) with  theoretical power of 90\% based on survival rate at $x=5$}
\label{Sfig6}
\end{center}
\end{figure}

\newpage
\section{Results based on exponential distribution with O'Brien-Fleming approach}
\begin{figure}[htp]
\begin{center}
			\scalebox{0.7}{\includegraphics[angle =0]{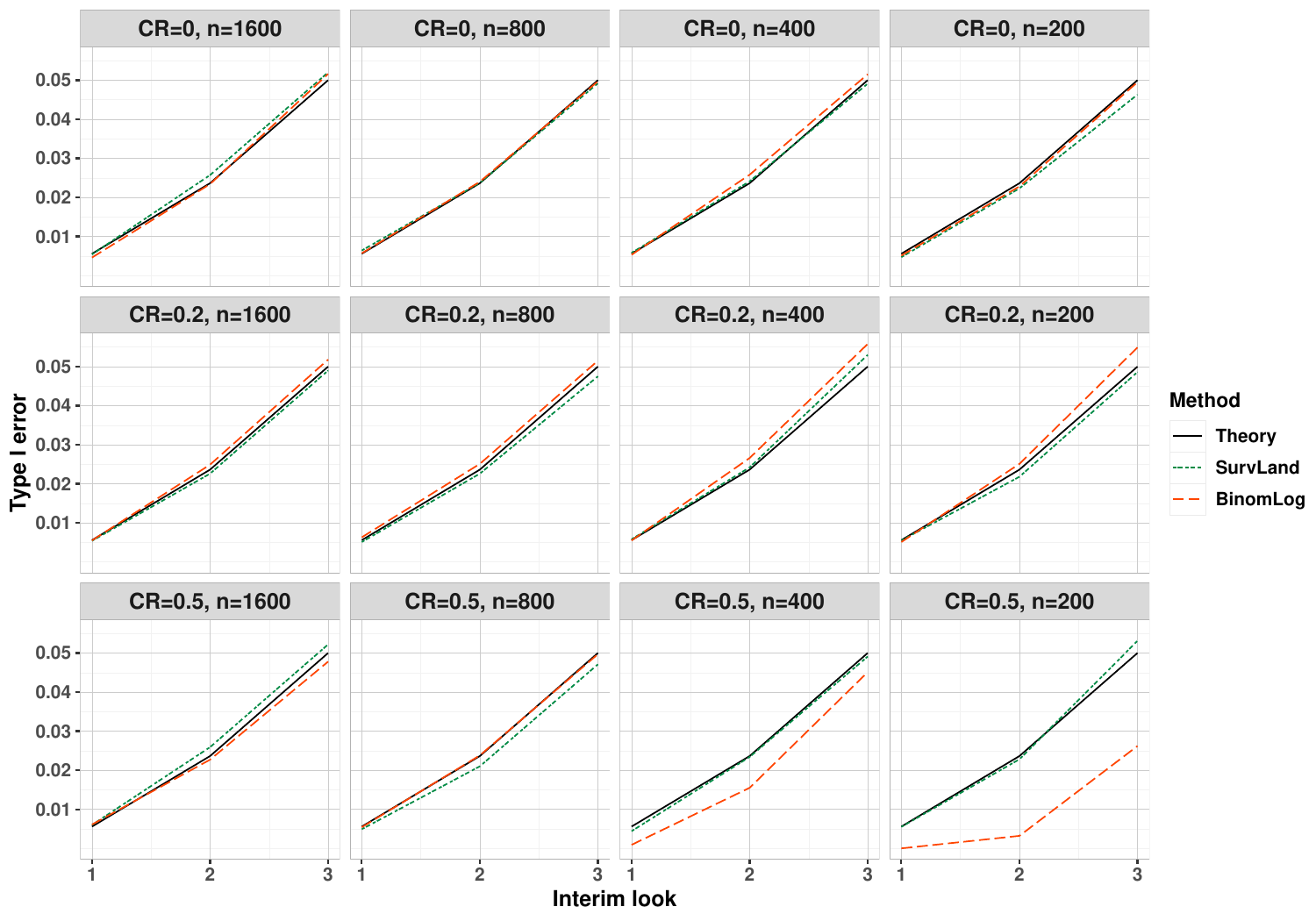}}
\caption{Type I error comparisons between proposed test (SurvLand) and binomial test (BinomLog) based on  survival rate at $x=1$.}
\label{Sfig7}
\end{center}
\end{figure}

\newpage
\begin{figure}[htp]
\begin{center}
			\scalebox{0.7}{\includegraphics[angle =0]{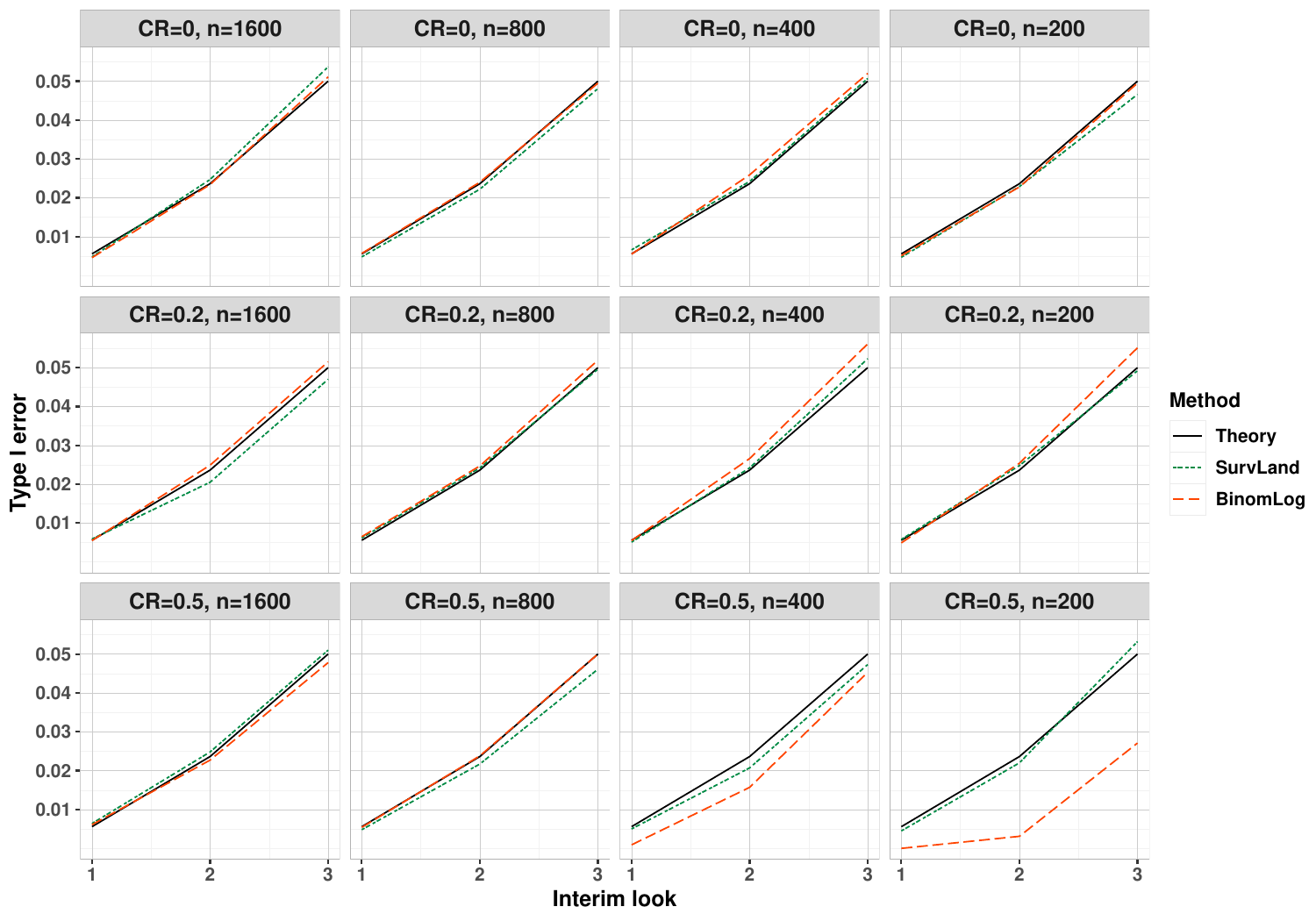}}
\caption{Type I error comparisons between proposed test (SurvLand) and binomial test  (BinomLog) based on survival rate at $x=5$}
\label{Sfig8}
\end{center}
\end{figure}

\newpage
\begin{figure}[htp]
\begin{center}
			\scalebox{0.7}{\includegraphics[angle =0]{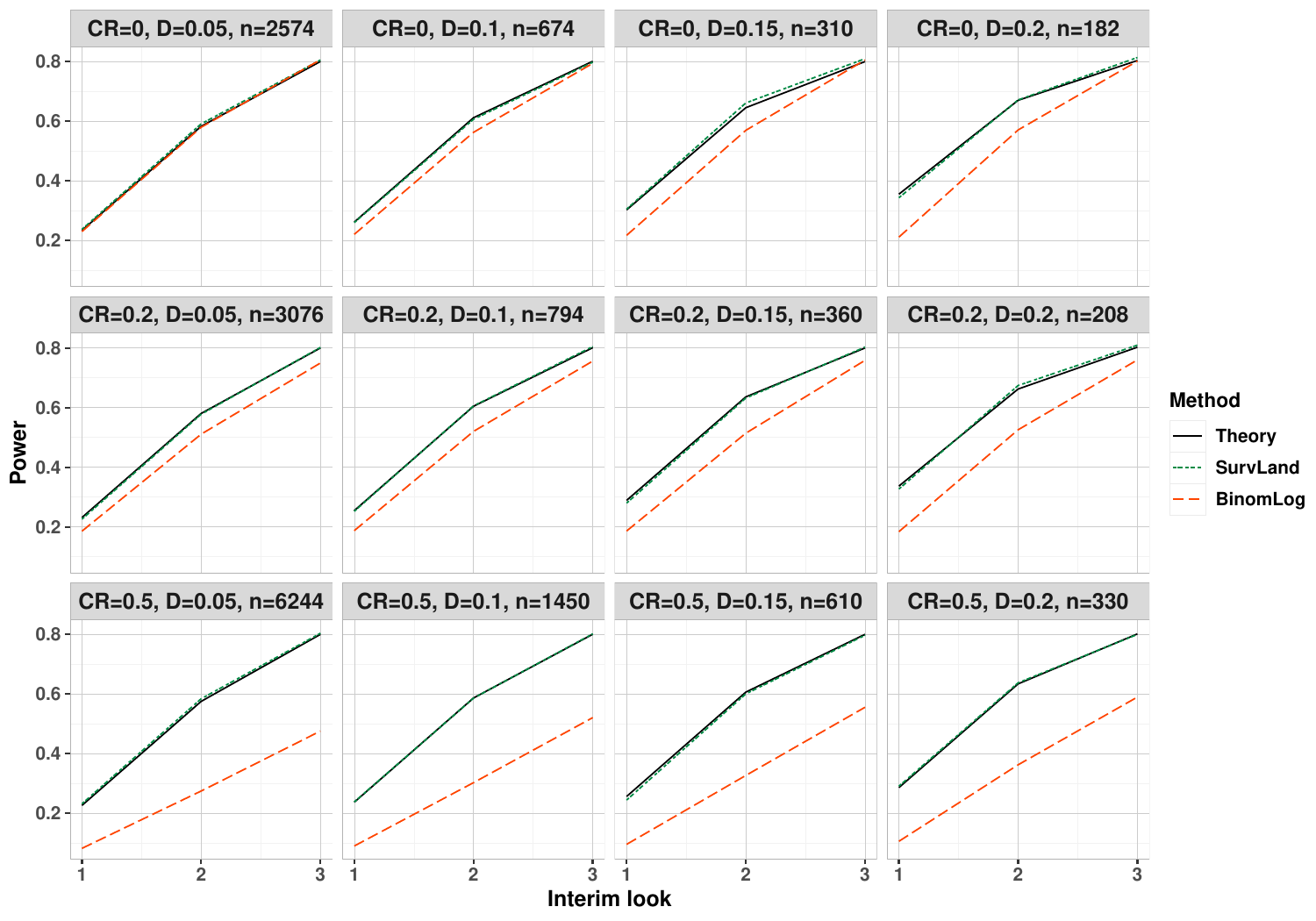}}
\caption{Empirical power comparisons between  proposed test (SurvLand) and  binomial test (BinomLog) with  theoretical power of 80\% based on survival rate at $x=1$ }
\label{Sfig9}
\end{center}
\end{figure}

\newpage
\begin{figure}[htp]
\begin{center}
			\scalebox{0.7}{\includegraphics[angle =0]{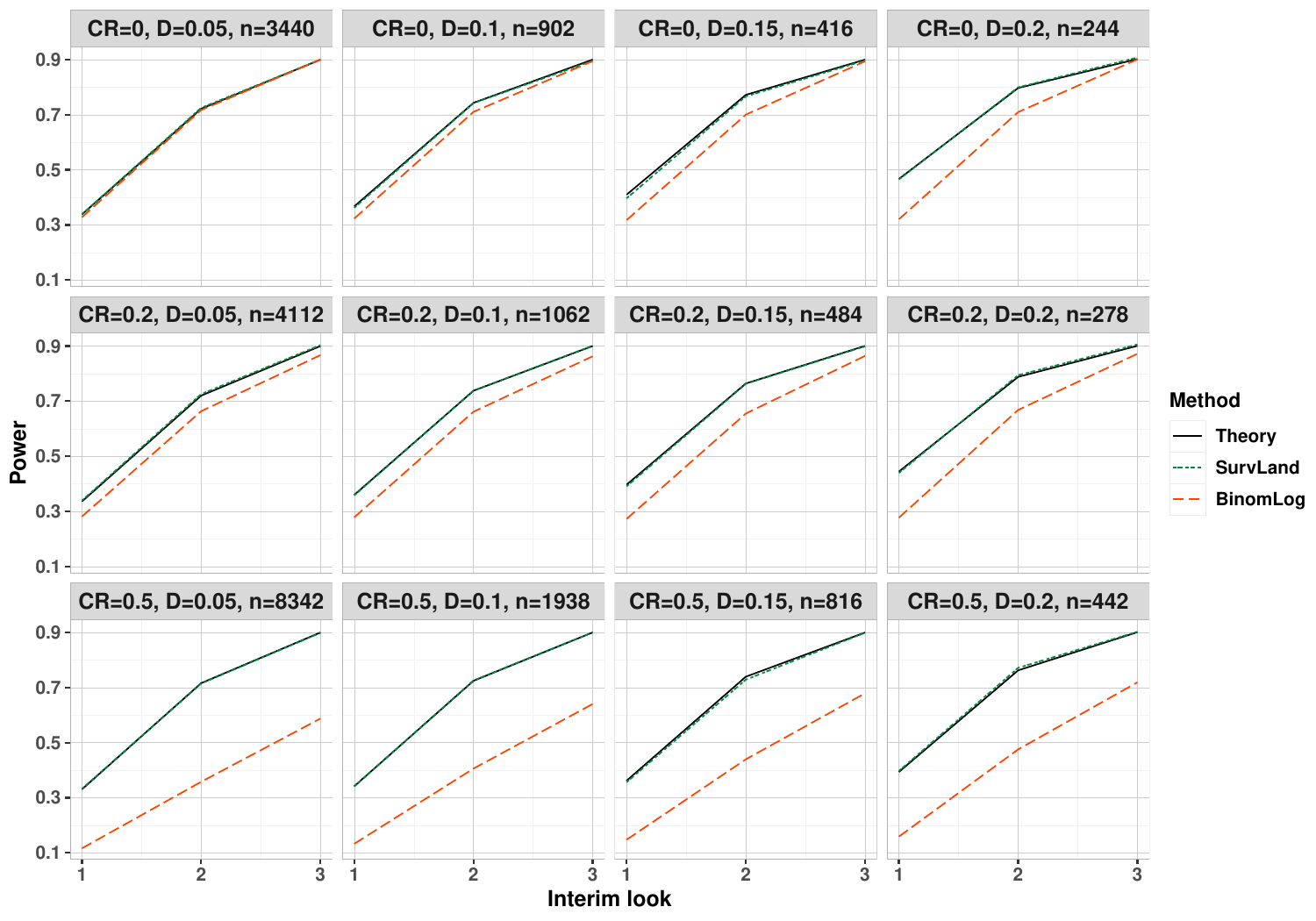}}
\caption{Empirical power comparisons between proposed test (SurvLand) and  binomial test (BinomLog) with  theoretical power of 90\% based on survival rate at $x=1$}
\label{Sfig10}
\end{center}
\end{figure}

\newpage
\begin{figure}[htp]
\begin{center}
			\scalebox{0.7}{\includegraphics[angle =0]{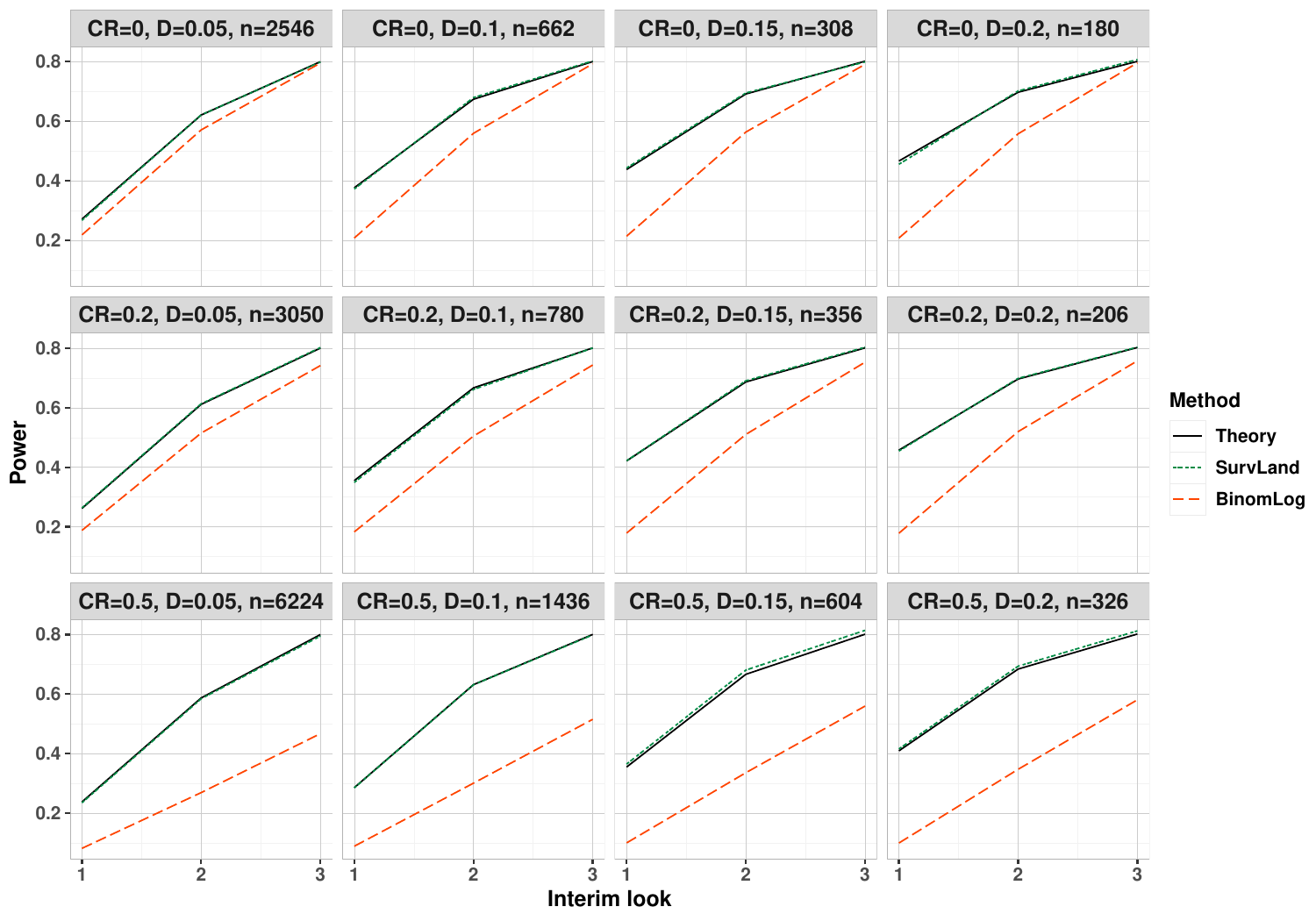}}
\caption{Empirical power comparisons between  proposed test (SurvLand) and  binomial test (BinomLog) with  theoretical power of 80\% based on survival rate at $x=5$ }
\label{Sfig11}
\end{center}
\end{figure}

\newpage
\begin{figure}[htp]
\begin{center}
			\scalebox{0.7}{\includegraphics[angle =0]{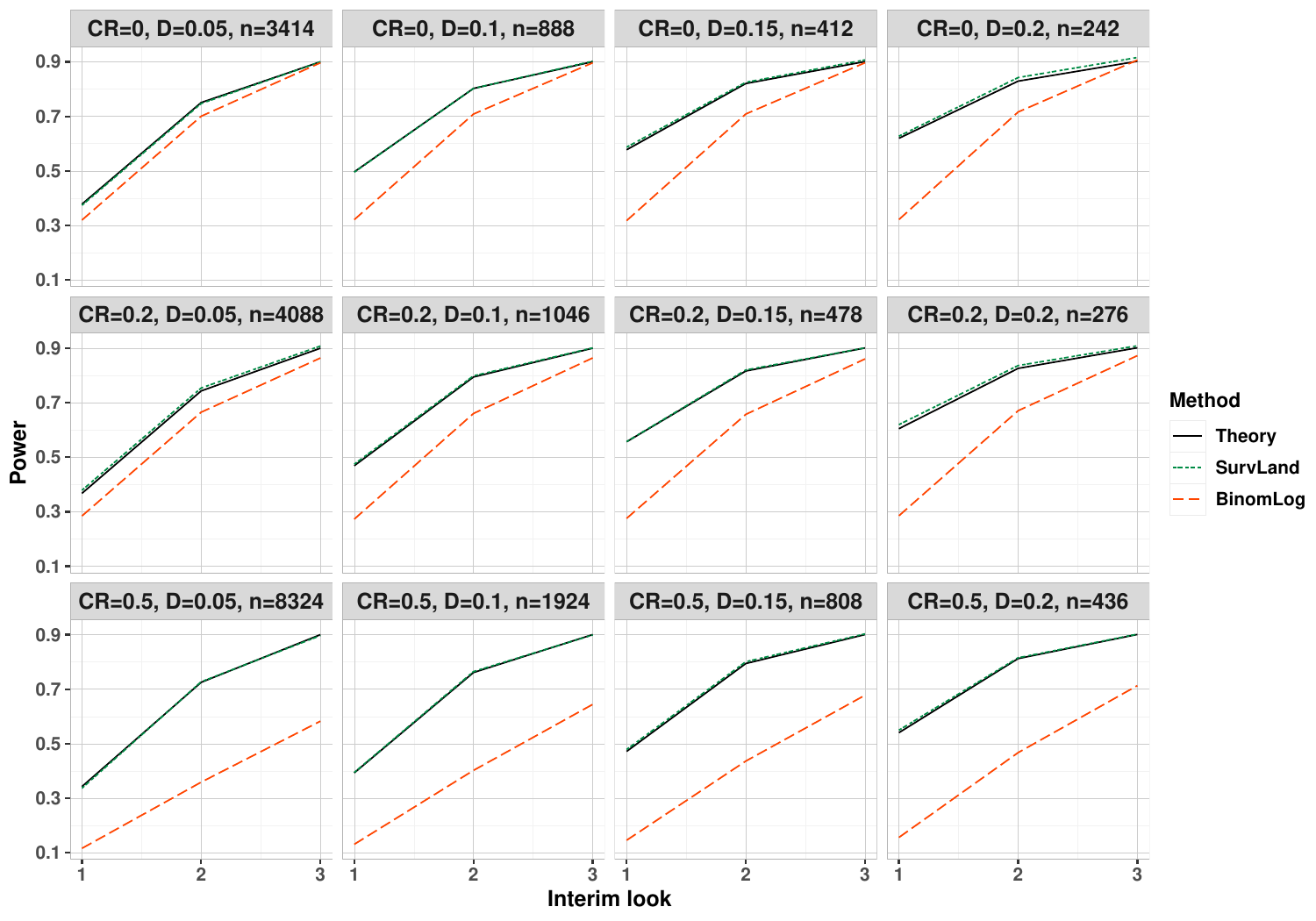}}
\caption{Empirical power comparisons between proposed test (SurvLand) and  binomial test (BinomLog) with  theoretical power of 90\% based on survival rate at $x=5$}
\label{Sfig12}
\end{center}
\end{figure}

\newpage
\section{Results based on exponential distribution with Lan-DeMets approach}
\begin{figure}[htp]
\begin{center}
			\scalebox{0.7}{\includegraphics[angle =0]{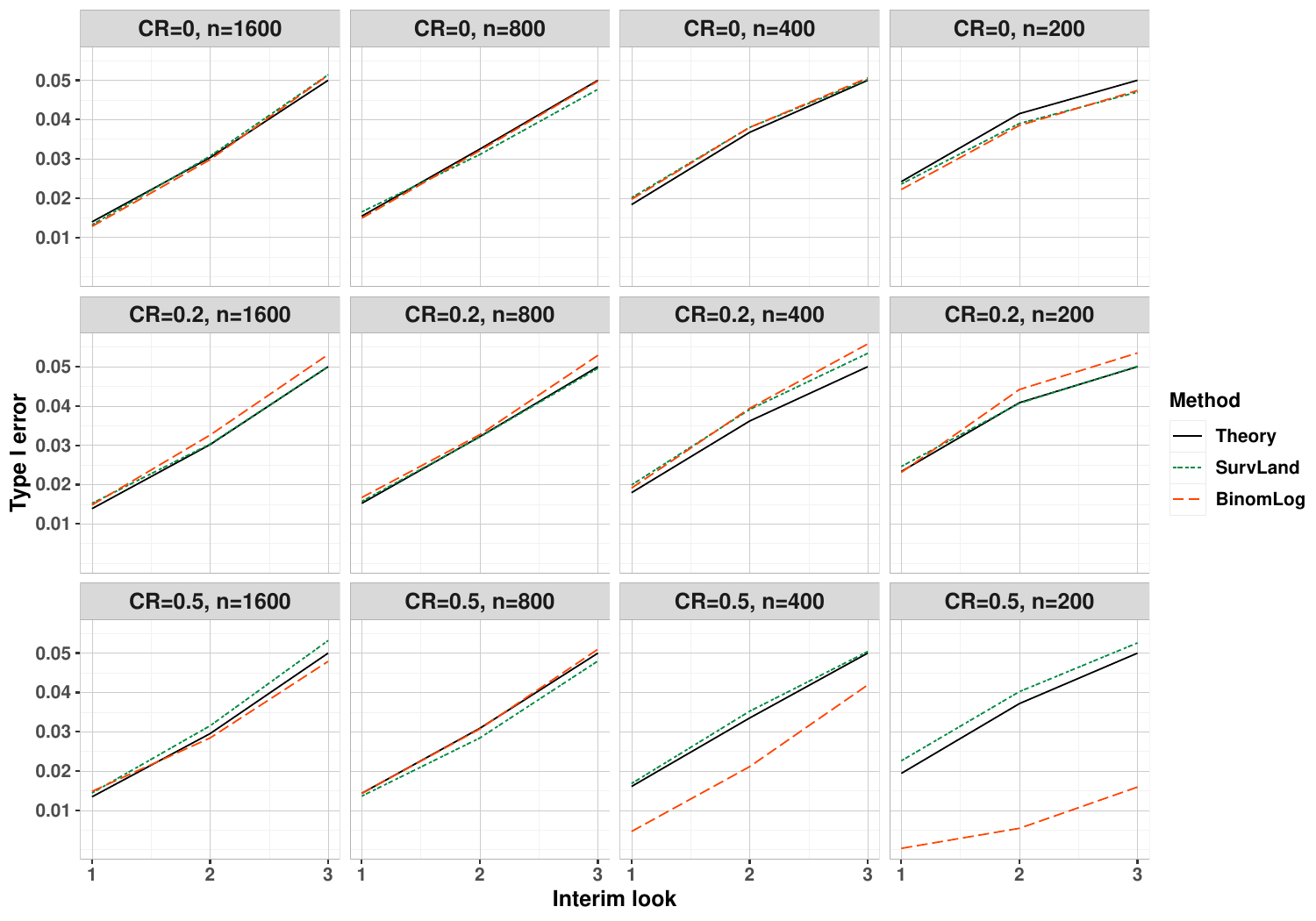}}
\caption{Type I error comparisons between proposed test (SurvLand) and binomial test (BinomLog) based on  survival rate at $x=1$.}
\label{Sfig13}
\end{center}
\end{figure}

\newpage
\begin{figure}[htp]
\begin{center}
			\scalebox{0.7}{\includegraphics[angle =0]{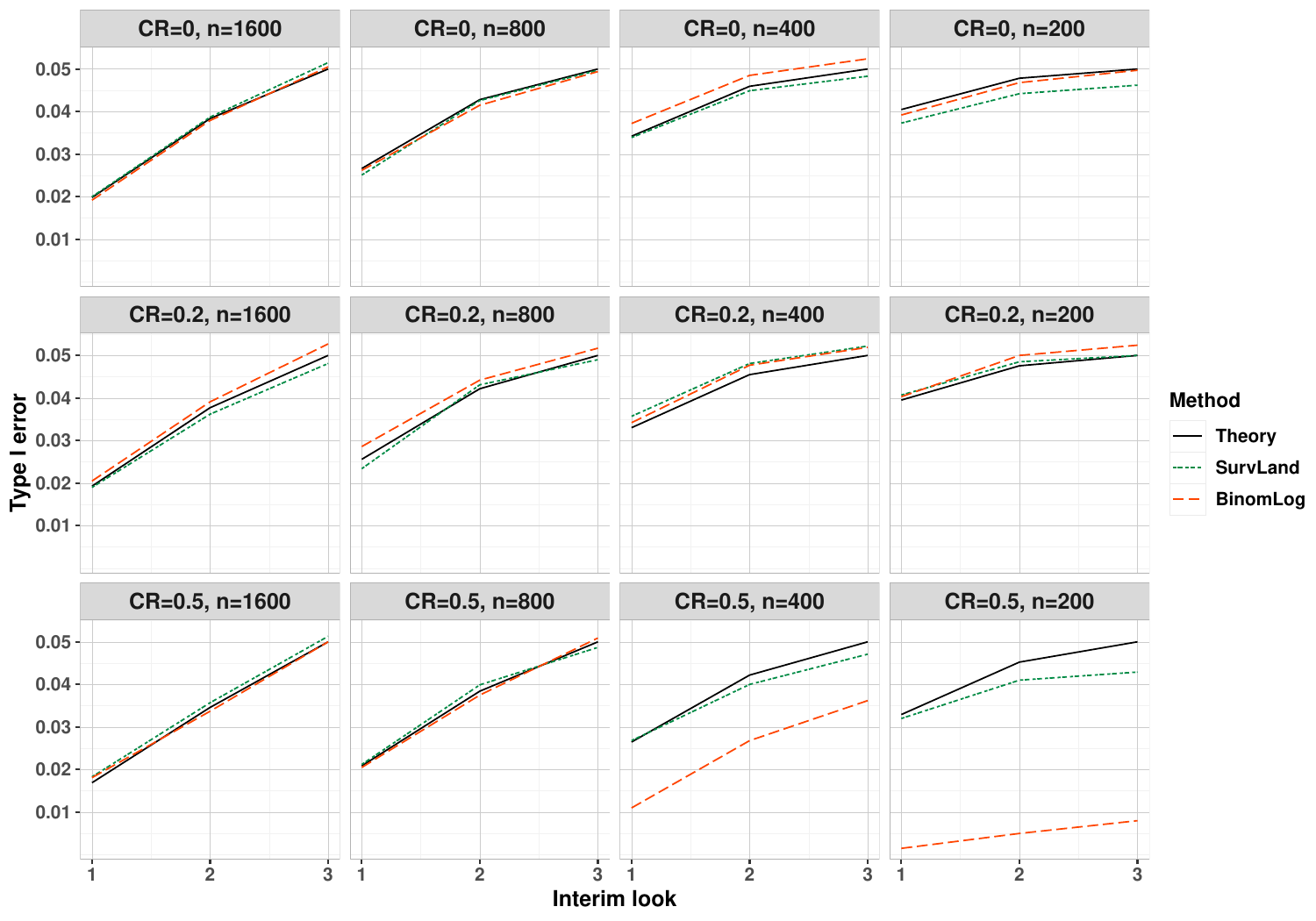}}
\caption{Type I error comparisons between proposed test (SurvLand) and binomial test  (BinomLog) based on survival rate at $x=5$}
\label{Sfig14}
\end{center}
\end{figure}

\newpage
\begin{figure}[htp]
\begin{center}
			\scalebox{0.7}{\includegraphics[angle =0]{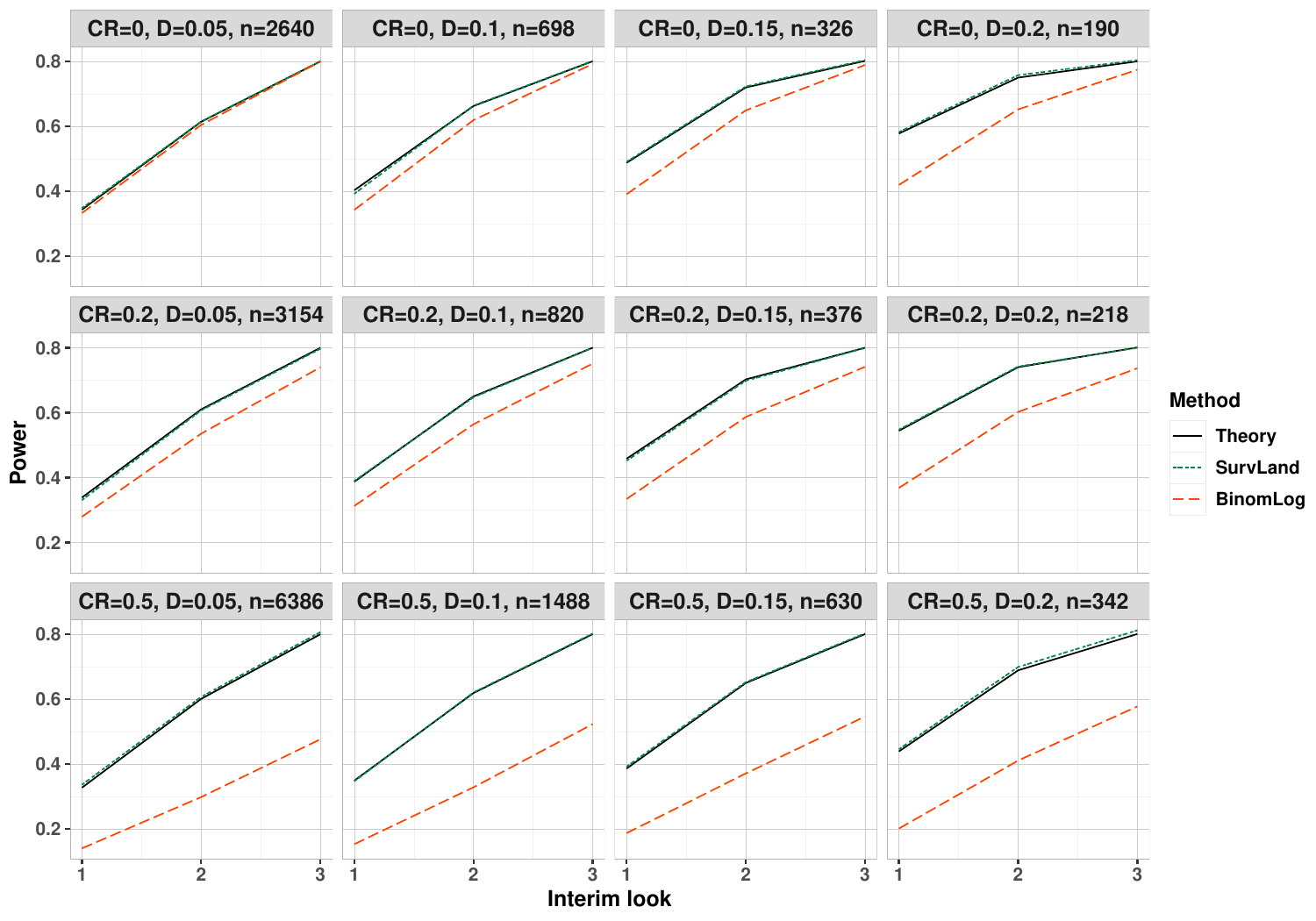}}
\caption{Empirical power comparisons between  proposed test (SurvLand) and  binomial test (BinomLog) with  theoretical power of 80\% based on survival rate at $x=1$}
\label{Sfig15}
\end{center}
\end{figure}

\newpage
\begin{figure}[htp]
\begin{center}
			\scalebox{0.7}{\includegraphics[angle =0]{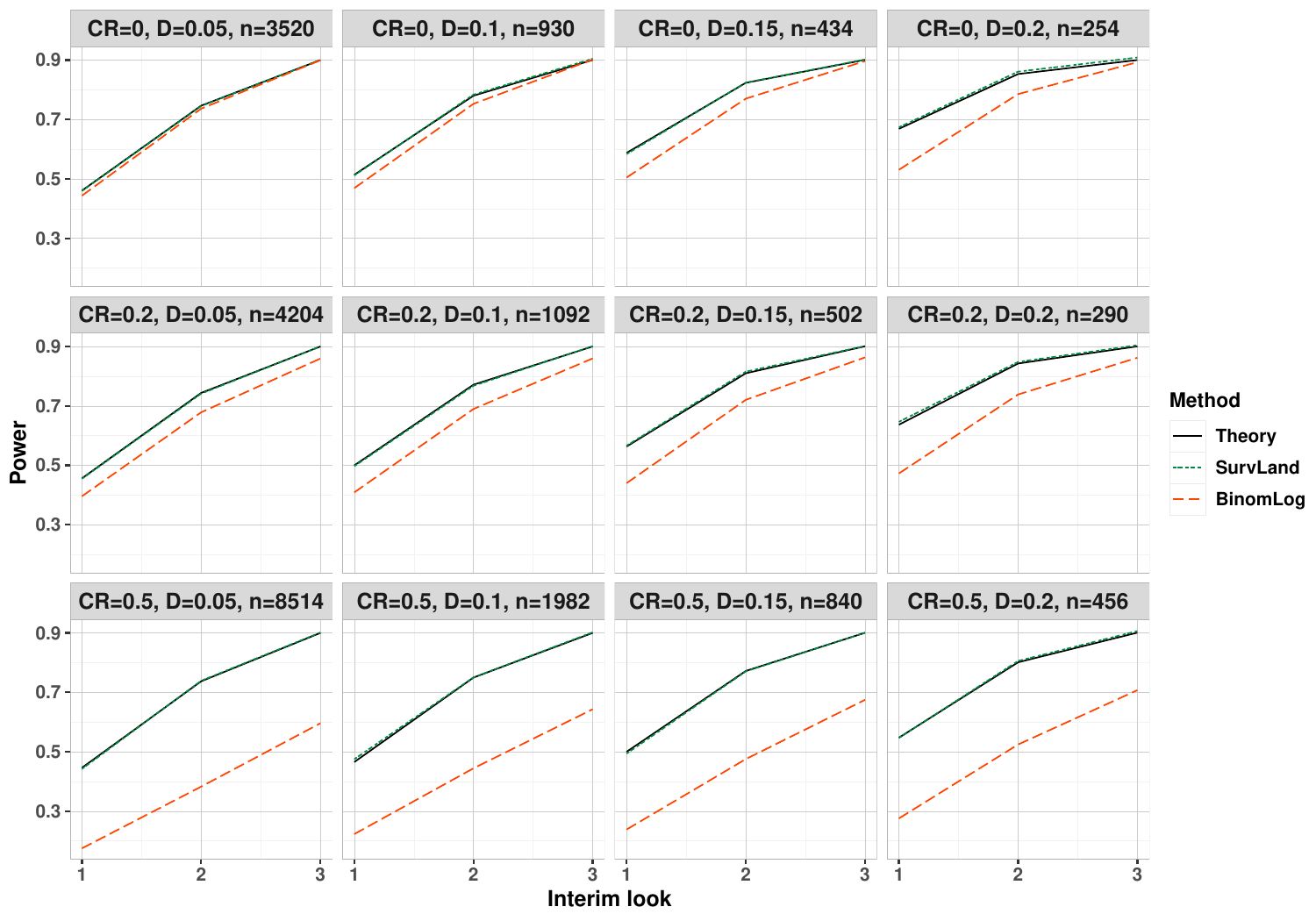}}
\caption{Empirical power comparisons between proposed test (SurvLand) and  binomial test (BinomLog) with  theoretical power of 90\% based on survival rate at $x=1$}
\label{Sfig16}
\end{center}
\end{figure}

\newpage
\begin{figure}[htp]
\begin{center}
			\scalebox{0.7}{\includegraphics[angle =0]{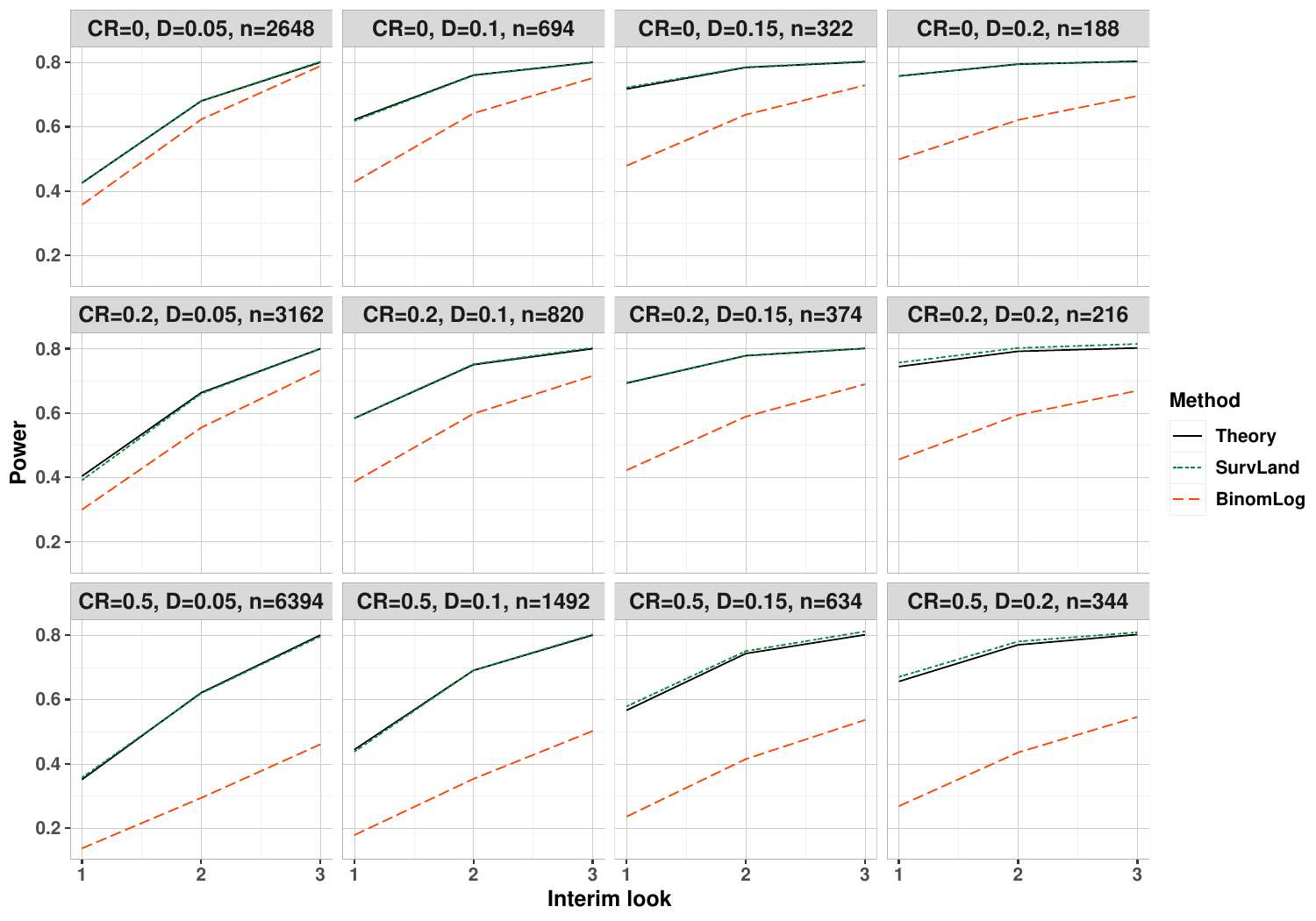}}
\caption{Empirical power comparisons between  proposed test (SurvLand) and  binomial test (BinomLog) with  theoretical power of 80\% based on survival rate at $x=5$ }
\label{Sfig17}
\end{center}
\end{figure}

\newpage
\begin{figure}[htp]
\begin{center}
			\scalebox{0.7}{\includegraphics[angle =0]{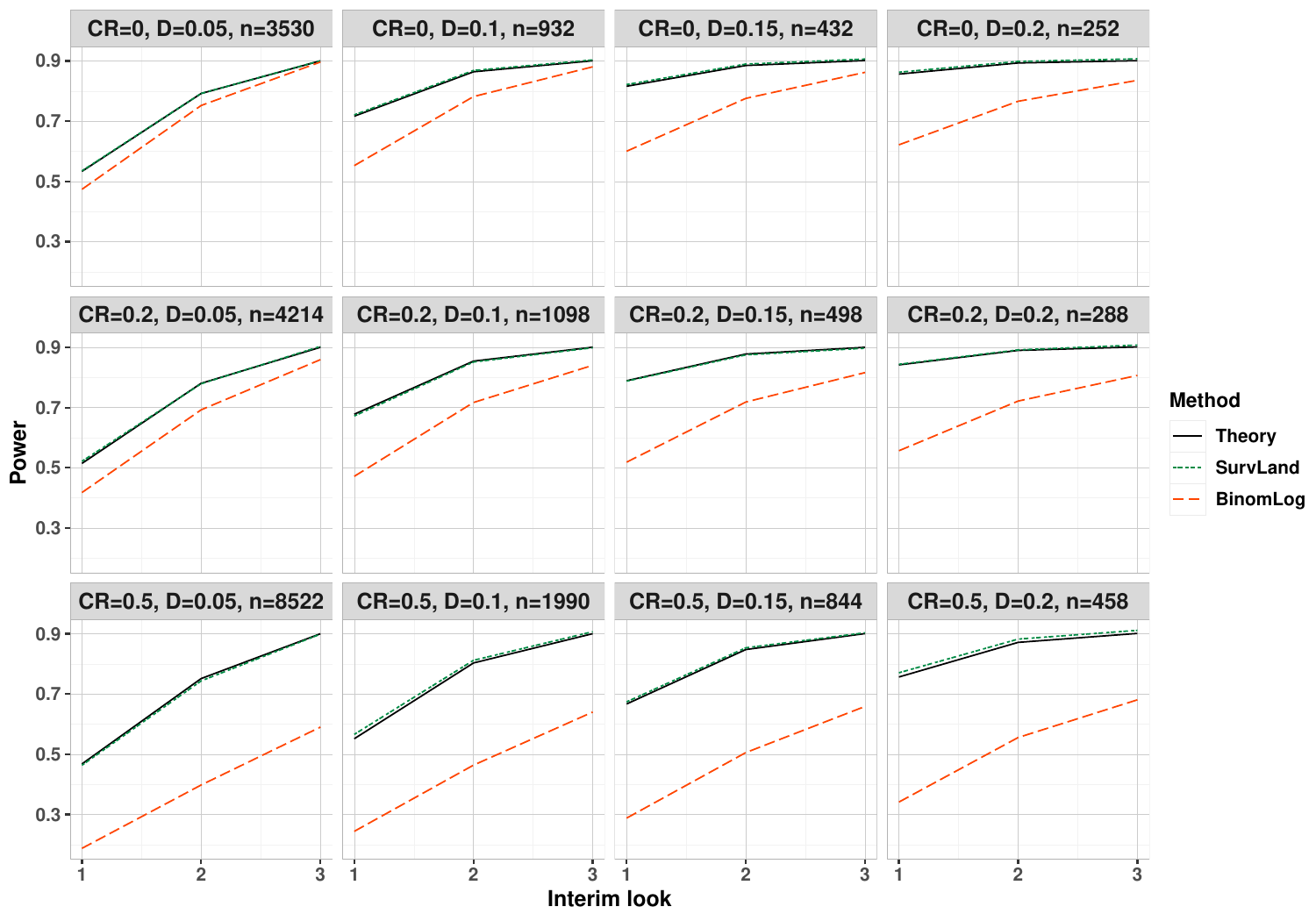}}
\caption{Empirical power comparisons between proposed test (SurvLand) and  binomial test (BinomLog) with  theoretical power of 90\% based on survival rate at $x=5$}
\label{Sfig18}
\end{center}
\end{figure}

\end{document}